\documentclass[11pt]{article}
\usepackage[utf8]{inputenc}
\usepackage[T1]{fontenc}
\usepackage{lmodern}
\usepackage{amsmath,amssymb,amsthm,mathrsfs}
\usepackage{stmaryrd}
\usepackage{graphicx}
\usepackage[a4paper,margin=1in]{geometry}
\usepackage[unicode,bookmarksnumbered]{hyperref}
\usepackage{bookmark}
\hypersetup{bookmarksopen=true,bookmarksopenlevel=2}

\newcommand{\FinTrace}{\mathsf{FinTrace}}
\newcommand{\InfTrace}{\mathsf{InfTrace}}

\newcommand{\Prop}{\mathsf{Prop}}
\newcommand{\restrict}{\mathsf{restrict}}

\newcommand{\TypeL}{\mathsf{Type}_\ell}
\newcommand{\TypeLi}[1]{\mathsf{Type}_{\ell,#1}}

\newcommand{\Extf}{\mathsf{Ext}_f}

\newcommand{\Uc}{\mathsf{Uc}}
\newcommand{\State}{\mathsf{State}}
\newcommand{\Event}{\mathsf{Event}}
\newcommand{\Nat}{\mathsf{Nat}}
\newcommand{\nil}{\mathsf{nil}}
\newcommand{\step}{\mathsf{step}}
\newcommand{\head}{\mathsf{head}}
\newcommand{\tail}{\mathsf{tail}}

\newcommand{\Step}{\mathsf{Step}}

\newcommand{\Tf}{T_f}
\newcommand{\Kf}{K_f}

\newtheorem{theorem}{Theorem}[section]
\newtheorem{lemma}[theorem]{Lemma}
\newtheorem{corollary}[theorem]{Corollary}
\newtheorem{definition}[theorem]{Definition}

\newtheorem{proposition}[theorem]{Proposition}

\title{DEKL 2.0: Trace-Indexed Knowledge Evolution in Dependent Type Theory}
\newif\ifuniqueAffiliation
\uniqueAffiliationtrue
\newcommand{\orcidicon}{}
\IfFileExists{orcid.pdf}{\renewcommand{\orcidicon}{\includegraphics[scale=0.06]{orcid.pdf}\hspace{1mm}}}{}

\ifuniqueAffiliation 
\author{ \href{https://orcid.org/0000-0002-4298-1834}{\orcidicon Chen Peng}\thanks{Chen Peng, Doctor of Computer Science, born in May 1979, from Nanfeng County, Jiangxi Province.} \\
	School of Information Science \\
	Beijing University of Language and Culture\\
	Beijing 100081 \\
	\texttt{chenpeng@blcu.edu.cn} \\
}
\else
\usepackage{authblk}

\newbox{\orcid}\sbox{\orcid}{\orcidicon}
\author[1]{%
	\href{https://orcid.org/0000-0000-0000-0000}{\usebox{\orcid}\hspace{1mm}David S.~Hippocampus\thanks{\texttt{hippo@cs.cranberry-lemon.edu}}}%
}
\author[1,2]{%
	\href{https://orcid.org/0000-0000-0000-0000}{\usebox{\orcid}\hspace{1mm}Elias D.~Striatum\thanks{\texttt{stariate@ee.mount-sheikh.edu}}}%
}
\affil[1]{Department of Computer Science, Cranberry-Lemon University, Pittsburgh, PA 15213}
\affil[2]{Department of Electrical Engineering, Mount-Sheikh University, Santa Narimana, Levand}
\fi

\begin{document}
\maketitle

\begin{abstract}
DEKL 2.0 is a dependent type-theoretic framework for trace-indexed knowledge
evolution. Its central claim is that the proof calculus remains monotone under
standard structural rules, while non-monotonic behavior arises semantically from
trace extension. Finite and infinite traces are first-class objects in the
computational universe; knowledge is interpreted as a presheaf over the
finite-trace category; and proposition-level reasoning is handled categorically
with fixed-point support. We establish trace--reachability correspondence and
completeness, characterize non-monotonicity by non-surjective restriction maps,
and present a semantic interpretation based on the free category generated by a
transition system. The framework unifies executable traces, typed witnesses, and
knowledge revision in one dependent language.
\end{abstract}

\paragraph{Keywords.}
Dependent type theory, trace semantics, presheaf semantics, knowledge evolution,
non-monotonicity, categorical semantics.

\tableofcontents

\section{Introduction}
\label{sec:intro}

Many dynamic systems require reasoning about facts that depend on execution
history. A statement may hold now, then fail after a new event. Standard
dependent type theory is monotone at the level of derivability (weakening,
substitution, subject reduction); therefore, this kind of knowledge invalidation
is not represented as a proof-theoretic rule.

DEKL resolves this tension by separating proof-theoretic monotonicity from semantic dynamics:
\begin{itemize}
  \item The \emph{proof system} remains ordinary dependent type theory.
  \item \emph{Traces} are represented as first-class terms in the computational universe.
  \item \emph{Knowledge} is indexed by traces and interpreted contravariantly.
\end{itemize}
Hence ``non-monotonicity'' is not a non-monotonic logic rule; it is the behavior of trace-indexed knowledge under trace extension.

The main technical contributions are:
\begin{itemize}
  \item a layered syntax for computational objects, trace-indexed knowledge,
  and fixed-point reasoning at the proposition layer;
  \item a bidirectional correspondence between finite traces and constructive reachability witnesses;
  \item a semantic characterization of non-monotonicity via failure of surjectivity of restriction maps;
  \item a categorical interpretation combining CwF syntax with
  free trace categories and presheaf semantics.
\end{itemize}

Paper organization follows a definition--theorem--proof-sketch style.
Section~\ref{sec:informal} gives intuition; Section~\ref{sec:syntax} fixes syntax
and judgments; Section~\ref{sec:typing-op} presents typing and operational rules;
Section~\ref{sec:nonmono} proves the presheaf characterization of non-monotonicity;
Section~\ref{sec:catsem} gives full categorical semantics;
Section~\ref{sec:meta} states metatheory; Section~\ref{sec:applications}
provides application templates; Section~\ref{sec:related} discusses related work;
Section~\ref{sec:limitations} discusses current limitations; and
Section~\ref{sec:conclusion} concludes.

\section{Informal Overview}
\label{sec:informal}

Consider an execution:
\[
\sigma_0 \to \sigma_1 \to \cdots \to \sigma_n.
\]
A finite trace from $\sigma_0$ to $\sigma_n$ is a term
\[
t : \FinTrace(\sigma_0,\sigma_n).
\]
This term is a constructive typed witness that $\sigma_n$ is reachable from
$\sigma_0$.

\subsection{A simple \texorpdfstring{$\tau$/$\tau'$}{tau/tau'} example}

Let $\tau$ be a trace prefix and $\tau'=\tau\cdot e$ its one-step extension. Suppose
\[
\Gamma \vdash p : \Kf(\tau)
\]
for a knowledge family $\Kf$. In general, it does not follow that
\[
\Gamma \vdash p : \Kf(\tau').
\]
The knowledge witness $p$ does not disappear; rather, its type index changes. So
invalidation is an index mismatch, not inconsistency of the core logic.

\subsection{Intuition for non-monotonicity}

Knowledge is modeled as a presheaf over traces:
\[
\Kf : \Tf^{op}\to \mathbf{Type}.
\]
For each extension $\epsilon:\tau\to\tau'$, there is a restriction map
\[
\restrict(\epsilon,-):\Kf(\tau')\to \Kf(\tau).
\]
Contravariance means we move backwards along extensions. If this map is not
surjective, some old knowledge at $\tau$ has no extension witness at $\tau'$,
which is exactly non-monotonic behavior.

\subsection{Running intuition: revocation under trace extension}

To make the index shift concrete, consider a credential lifecycle with events
$\mathsf{Issue}(c)$, $\mathsf{Use}(c)$, and $\mathsf{Revoke}(c)$. Let
\[
\mathsf{Auth}(c,\tau):\TypeL
\]
denote that credential $c$ is valid on trace $\tau$. On a prefix containing only
issue/use events, one may construct
\[
q:\mathsf{Auth}(c,\tau).
\]
If the trace extends to
\[
\tau'=\tau\cdot\mathsf{Revoke}(c),
\]
the original typed witness $q$ generally does not inhabit
$\mathsf{Auth}(c,\tau')$. This illustrates the central point used throughout the
paper: typed witnesses are stable as terms, but their \emph{typing index} changes with
trace evolution.

\section{Syntax and Judgments}
\label{sec:syntax}

The core judgment forms are standard:
\[
\Gamma \vdash A : U
\qquad
\Gamma \vdash t : A.
\]
We use three stratified layers:
\begin{align*}
\Uc_0 : \Uc_1 : \Uc_2 : \cdots,\quad
\TypeLi{0} : \TypeLi{1} : \cdots,\quad
\Prop : \TypeLi{0}.
\end{align*}

\subsection{Notation table}

\begin{center}
\begin{tabular}{ll}
\hline
Symbol & Meaning \\
\hline
$\State,\Event$ & base types of states and events \\
$\Step(\sigma,e,\sigma')$ & one-step transition typed witness \\
$\FinTrace(\sigma_0,\sigma_n)$ & finite traces from start to end state \\
$\InfTrace$ & coinductive infinite-trace type \\
$\Tf$ & finite-trace category generated by $(\State,\Event,\Step)$ \\
$\Kf:\Tf^{op}\to\mathbf{Type}$ & trace-indexed knowledge presheaf \\
$\restrict(\epsilon,-)$ & restriction along extension morphism $\epsilon$ \\
$\Gamma\vdash t:A$ & typing judgment \\
$\sigma_0\leadsto\sigma_n$ & reflexive-transitive reachability \\
\hline
\end{tabular}
\end{center}

\subsection{Layered syntax (\texorpdfstring{$\Uc$}{Uc} / Type / Prop)}

\begin{itemize}
  \item \textbf{Computational layer $\Uc$.} Carries executable objects (states,
  events, traces, naturals, etc.) and standard MLTT constructors.
  \item \textbf{Knowledge layer $\TypeL$.} Families indexed by traces, interpreted as presheaves.
  \item \textbf{Proposition layer $\Prop$.} Proposition objects with fixed-point constructions, interpreted categorically.
\end{itemize}

Core base types include:
\[
\State:\Uc_0,\quad \Event:\Uc_0,\quad \Nat:\Uc_0,\quad \FinTrace:\Uc_0,\quad \InfTrace:\Uc_0.
\]

\subsection{Trace language}

Finite traces are inductive:
\[
\nil : \State\to \FinTrace,\qquad
\step : \FinTrace\to \Event\to \State\to \FinTrace.
\]
Infinite traces are coinductive:
\[
\InfTrace : \Uc_0,\qquad
\head:\InfTrace\to\State,\qquad
\tail:\InfTrace\to(\Event\times\InfTrace).
\]
Guarded corecursion is required for productivity.

\subsection{Structural and definitional judgments}

In addition to typing judgments, we rely on the standard structural layer:
\[
\vdash \Gamma\ \mathsf{ctx},\qquad
\Gamma\vdash t\equiv u:A,\qquad
\Gamma\vdash A\equiv B:U.
\]
These judgments support transport of terms and types across definitional
equalities. In particular, trace-indexed families are used extensionally through
conversion:
\[
\frac{\Gamma\vdash t:A \qquad \Gamma\vdash A\equiv B:U}
     {\Gamma\vdash t:B}.
\]
This rule is essential when different syntactic presentations of a trace denote
the same normal form.

\section{Typing / Operational Rules}
\label{sec:typing-op}

\subsection{Core typing and operational rules}

Context formation and dependent products follow standard dependent type theory:
\[
\frac{}{\,\vdash \cdot\ \mathsf{ctx}\,}
\qquad
\frac{\vdash\Gamma\ \mathsf{ctx}\quad\Gamma\vdash A:U}
     {\,\vdash\Gamma,x:A\ \mathsf{ctx}\,},
\]
\[
\frac{\Gamma\vdash A:U \quad \Gamma,x:A\vdash B:U}
     {\Gamma\vdash \Pi(x:A).B:U},
\]
\[
\frac{\Gamma,x:A\vdash t:B}{\Gamma\vdash \lambda x.t:\Pi(x:A).B}
\qquad
\frac{\Gamma\vdash f:\Pi(x:A).B\quad\Gamma\vdash a:A}
     {\Gamma\vdash f\,a:B[a/x]}.
\]

For the remainder of this section, fix a transition system with state type $\State$, event type $\Event$, and transition predicate
\[
\Step:\State\to\Event\to\State\to\Uc_0.
\]
Write $\sigma\to\sigma'$ for one-step transition and $\sigma_0\leadsto\sigma_n$ for reflexive-transitive closure.

\subsection{Trace-specific formation and elimination rules}

For clarity, we isolate the trace-specific typing shape used in proofs:
\[
\frac{\Gamma\vdash \sigma:\State}{\Gamma\vdash \nil(\sigma):\FinTrace(\sigma,\sigma)}
\]
\[
\frac{\Gamma\vdash \tau:\FinTrace(\sigma_0,\sigma_1)\quad
      \Gamma\vdash e:\Event\quad
      \Gamma\vdash \pi:\Step(\sigma_1,e,\sigma_2)}
     {\Gamma\vdash \step(\tau,e,\pi):\FinTrace(\sigma_0,\sigma_2)}.
\]
Intuitively, $\step$ composes a previously validated prefix with one operational
transition witness $\pi$.

For finite traces, elimination proceeds by structural recursion on $\tau$:
\[
\mathsf{trace\mbox{-}elim}:\Pi(P)\to P(\nil)\to
\left(\Pi(\tau,e,\pi).\,P(\tau)\to P(\step(\tau,e,\pi))\right)\to
\Pi(\tau).\,P(\tau),
\]
where $P$ ranges over trace-indexed families in an appropriate universe.

\begin{theorem}[Trace--Proof Correspondence]
For arbitrary states $\sigma_0,\sigma_n\in\State$, a term
\[
t:\FinTrace(\sigma_0,\sigma_n)
\]
is a constructive typed reachability witness for $\sigma_0\leadsto\sigma_n$.
\end{theorem}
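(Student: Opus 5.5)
I will make the statement precise by fixing a type-theoretic rendering of reachability, and then proving the claim by structural induction on traces using the eliminator $\mathsf{trace\mbox{-}elim}$ of the previous subsection. Concretely, I take $\sigma_0\leadsto\sigma_n$ to be the inductive family generated by
\[
\mathsf{refl}_\sigma:\sigma\leadsto\sigma,
\qquad
\mathsf{snoc}:(\sigma_0\leadsto\sigma_1)\to\Step(\sigma_1,e,\sigma_2)\to(\sigma_0\leadsto\sigma_2),
\]
so that it is the reflexive–transitive closure of $\Step$ presented in "snoc" form. The claim ``$t$ is a constructive typed reachability witness'' then amounts to exhibiting a term
\[
\mathsf{reach}:\Pi(\sigma_0,\sigma_n).\,\FinTrace(\sigma_0,\sigma_n)\to(\sigma_0\leadsto\sigma_n),
\]
defined uniformly in $t$ by a closed, terminating recursion.

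I construct $\mathsf{reach}$ by instantiating $\mathsf{trace\mbox{-}elim}$ with the motive $P(\tau)\;:=\;(\sigma_0\leadsto\sigma_1)$ for $\tau:\FinTrace(\sigma_0,\sigma_1)$. This motive is indexed by both endpoints, so I will work with the indexed form of the eliminator, which the typing rules for $\nil$ and $\step$ implicitly provide.
\begin{itemize}
\item In the $\nil(\sigma)$ case, the formation rule forces the type $\FinTrace(\sigma,\sigma)$, and I return $\mathsf{refl}_\sigma$.
\item In the $\step(\tau,e,\pi)$ case, the induction hypothesis gives $r:\sigma_0\leadsto\sigma_1$. Since $\pi:\Step(\sigma_1,e,\sigma_2)$, I return $\mathsf{snoc}(r,\pi):\sigma_0\leadsto\sigma_2$.
\end{itemize}
The conversion rule of Section~\ref{sec:syntax} handles any definitional mismatch between the endpoint indices produced by the eliminator and those stated in the rules.

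To justify calling $t$ a witness of \emph{reachability} rather than merely mapping into it, I also give the converse $\mathsf{trace}:(\sigma_0\leadsto\sigma_n)\to\FinTrace(\sigma_0,\sigma_n)$ by the mirror recursion, sending $\mathsf{refl}\mapsto\nil$ and $\mathsf{snoc}\mapsto\step$. I then check by a second induction that the two maps are mutually inverse up to the definitional equality $\Gamma\vdash t\equiv u:A$: in each constructor case the round trip reduces by the computation rules of the eliminators to the same constructor applied to the inductive hypothesis. This establishes the bidirectional correspondence promised in the introduction, with each direction a total, structurally recursive term.

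The main obstacle is not the recursion itself but the mismatch in the paper's presentation. The trace language lists $\FinTrace$ as an unindexed type in $\Uc_0$, with $\step$ taking a bare $\State$, whereas the typing rules and the theorem use the indexed $\FinTrace(\sigma_0,\sigma_n)$ with a $\Step$ witness $\pi$. I would first fix the indexed version as authoritative, reading the unindexed signature as its erasure, and state the dependent eliminator over the endpoint indices explicitly. If one insists on the unindexed form, the plan is to prove a validity predicate $\mathsf{valid}(\tau,\sigma_0,\sigma_n)$ by the same induction and pair it with $\tau$ in a $\Sigma$-type. The correspondence then goes through unchanged, with the validity proof carrying the $\Step$ witnesses.
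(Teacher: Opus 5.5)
Your argument is correct and is the paper's proof made precise. Both go by structural induction on the trace via $\mathsf{trace\mbox{-}elim}$: $\nil(\sigma)$ is sent to reflexive reachability, and $\step(\tau,e,\pi)$ to the prefix witness extended by the transition $\pi$. Your explicit snoc-form definition of $\leadsto$ and your indexed motive fill in what the sketch leaves implicit, and the converse map you add is the paper's separate Trace Completeness theorem.

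One caveat concerns your round-trip claim, which the theorem does not need. Meta-level induction only gives $\mathsf{trace}(\mathsf{reach}(t))\equiv t$ for traces built from constructors, or for closed traces via canonicity. It fails for a neutral $t$ such as a context variable, where you would need an identity type and a propositional statement instead of definitional equality.
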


\begin{proof}[Proof sketch]
By induction on the structure of finite traces. The base constructor $\nil$
witnesses reflexive reachability. The extension constructor $\step$ composes a
previously witnessed prefix with one legal transition step. Hence every inhabitant
of $\FinTrace(\sigma_0,\sigma_n)$ encodes a finite typed witness of
$\sigma_0\leadsto\sigma_n$.
\end{proof}

\begin{theorem}[Trace Completeness]
For arbitrary states $\sigma_0,\sigma_n\in\State$, if $\sigma_0\leadsto\sigma_n$, then there exists
\[
t:\FinTrace(\sigma_0,\sigma_n).
\]
\end{theorem}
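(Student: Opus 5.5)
We proceed by induction on the derivation of $\sigma_0\leadsto\sigma_n$. We treat reachability as the reflexive--transitive closure of the one-step relation $\sigma\to\sigma'$, where $\sigma\to\sigma'$ means that there exist $e:\Event$ and $\pi:\Step(\sigma,e,\sigma')$. We present this closure in its \emph{right-extending} (snoc) form: $\sigma_0\leadsto\sigma_0$ holds by reflexivity, and from $\sigma_0\leadsto\sigma_1$ and $\sigma_1\to\sigma_2$ we may infer $\sigma_0\leadsto\sigma_2$. This choice makes the induction mirror exactly the two constructors $\nil$ and $\step$ of $\FinTrace$.

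\textbf{Base case.} In the reflexive case $\sigma_n=\sigma_0$, we take $t:=\nil(\sigma_0)$. It is well typed at $\FinTrace(\sigma_0,\sigma_0)$ by the formation rule for $\nil$.

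\textbf{Inductive case.} Suppose $\sigma_0\leadsto\sigma_1$ and $\sigma_1\to\sigma_2$. The induction hypothesis gives $\tau:\FinTrace(\sigma_0,\sigma_1)$. Unpacking the one-step transition yields an event $e$ and a witness $\pi:\Step(\sigma_1,e,\sigma_2)$. The $\step$ rule then gives $\step(\tau,e,\pi):\FinTrace(\sigma_0,\sigma_2)$. If the two endpoints agree only up to definitional equality, we transport the resulting type using the conversion rule of Section~\ref{sec:syntax}. Read constructively, the argument is a recursive function from reachability proofs to traces, so the existential in the statement is realized by an explicit term.

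\textbf{Main obstacle.} The delicate point is the presentation of $\leadsto$. If reachability is defined instead by left extension ($\sigma_0\to\sigma_1$ and $\sigma_1\leadsto\sigma_n$) or by general transitivity, the induction no longer lines up with the constructor $\step$, which extends at the right end. My first move would be to prove a concatenation lemma: given $\tau_1:\FinTrace(\sigma_0,\sigma_1)$ and $\tau_2:\FinTrace(\sigma_1,\sigma_2)$, there is a trace $\tau_1\cdot\tau_2:\FinTrace(\sigma_0,\sigma_2)$. It is defined by $\mathsf{trace\mbox{-}elim}$ on $\tau_2$, with $\tau_1\cdot\nil(\sigma_1)=\tau_1$ and $\tau_1\cdot\step(\tau,e,\pi)=\step(\tau_1\cdot\tau,e,\pi)$.

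With concatenation available, each case of any presentation goes through. A single step $\sigma\to\sigma'$ becomes $\step(\nil(\sigma),e,\pi)$, reflexivity becomes $\nil$, and transitivity becomes concatenation. This also shows that all standard presentations of $\leadsto$ yield the same completeness result. Finally, if $\leadsto$ is stated propositionally, i.e.\ truncated in $\Prop$, the existential conclusion must be read in the same truncated sense, since one cannot extract a specific trace from a truncated proof. I would state this explicitly.
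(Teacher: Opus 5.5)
Your proof is correct and is essentially the paper's argument: the paper inducts on the length of the reachability run and splits a run of length $k+1$ into a prefix $\sigma_0\leadsto\sigma_k$ followed by a last step $\sigma_k\to\sigma_n$, using $\nil$ in the base case and $\step$ on the inductive prefix trace, which is exactly your snoc-form induction. Your concatenation lemma is a sound addition that makes explicit the decomposition the paper takes for granted when $\leadsto$ is not given in right-extending form.
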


\begin{proof}[Proof sketch]
Induct on the length of a typed reachability witness. Length $0$ yields the empty
trace. For length $k+1$, decompose the run as $\sigma_0\leadsto\sigma_k$
followed by $\sigma_k\to\sigma_n$; apply the induction hypothesis to obtain a
prefix trace and then extend it using $\step$.
\end{proof}

\section{Non-monotonicity via Presheaf Semantics}
\label{sec:nonmono}

\subsection{Trace category}

\begin{definition}[Finite-trace category]
Fix a transition system $(\State,\Event,\Step)$. Let $\Tf$ be the category whose
objects are finite traces generated by this system and whose morphisms are trace
extensions $\epsilon:\tau\to\tau'$.
\end{definition}

\subsection{Knowledge presheaf}

\begin{definition}[Trace-indexed knowledge]
A constructive knowledge system is a functor
\[
\Kf:\Tf^{op}\to\mathbf{Type},
\]
assigning each trace $\tau$ a type $\Kf(\tau)$ and each extension $\epsilon:\tau\to\tau'$ a restriction map
\[
\restrict(\epsilon,-):\Kf(\tau')\to \Kf(\tau).
\]
\end{definition}

\subsection{Restriction maps}

\begin{lemma}[Restriction Functoriality]
Restriction maps satisfy:
\[
\restrict(id_\tau,k)=k,\qquad
\restrict(\epsilon_1\circ\epsilon_2,k)=
\restrict(\epsilon_1,\restrict(\epsilon_2,k)).
\]
\end{lemma}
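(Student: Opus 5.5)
The lemma follows directly from the definition of a constructive knowledge system as a functor $\Kf:\Tf^{op}\to\mathbf{Type}$. The plan is to unfold what it means for $\Kf$ to preserve identities and composition. Then we identify $\restrict(\epsilon,-)$ with the action $\Kf(\epsilon)$ of $\Kf$ on the morphism $\epsilon$, viewed as a morphism of $\Tf^{op}$. Concretely, for $\epsilon:\tau\to\tau'$ in $\Tf$ we have $\epsilon^{op}:\tau'\to\tau$ in $\Tf^{op}$. By definition, $\restrict(\epsilon,k)=\Kf(\epsilon^{op})(k)$ for $k:\Kf(\tau')$.

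\textbf{Identity law.} Identities in $\Tf^{op}$ are those of $\Tf$, so $id_\tau^{op}=id_\tau$. Functoriality then gives $\Kf(id_\tau)=id_{\Kf(\tau)}$, and evaluating at $k$ yields $\restrict(id_\tau,k)=k$. One should check that $\Tf$ really is a category, with $id_\tau$ the empty extension. This is immediate from the trace language: extension of $\tau$ by the empty sequence of $\step$'s returns $\tau$ on the nose by structural recursion. It uses the conversion rule of Section~\ref{sec:syntax} to identify the two syntactic presentations.

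\textbf{Composition law.} Let $\epsilon_2:\tau\to\tau'$ and $\epsilon_1:\tau'\to\tau''$. Then $\epsilon_1\circ\epsilon_2:\tau\to\tau''$ and $k:\Kf(\tau'')$. In $\Tf^{op}$ composition reverses, so $(\epsilon_1\circ\epsilon_2)^{op}=\epsilon_2^{op}\circ\epsilon_1^{op}$. Functoriality gives
\[
\Kf\bigl((\epsilon_1\circ\epsilon_2)^{op}\bigr)=\Kf(\epsilon_2^{op})\circ\Kf(\epsilon_1^{op}).
\]
Applied to $k$, this says that restricting along the composite equals first restricting along $\epsilon_1$ (from $\tau''$ to $\tau'$) and then along $\epsilon_2$ (from $\tau'$ to $\tau$). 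That is, $\restrict(\epsilon_1\circ\epsilon_2,k)=\restrict(\epsilon_2,\restrict(\epsilon_1,k))$.

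\textbf{Main obstacle.} The hard part is reconciling this with the displayed formula in the statement. The statement nests $\epsilon_2$ inside and $\epsilon_1$ outside, so the two must be matched to the intended composition convention. I would fix the convention explicitly. Either read $\epsilon_1\circ\epsilon_2$ in diagrammatic order ($\epsilon_1$ first, then $\epsilon_2$), or equivalently take $\epsilon_2:\tau'\to\tau''$ and $\epsilon_1:\tau\to\tau'$ with composite written as concatenation of extensions. Under that convention the type-correct reading is exactly the displayed equation, since the inner restriction must consume $k:\Kf(\tau'')$.

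Beyond this bookkeeping, one needs $\Tf$'s composition to be associative and unital. I would get this by representing an extension as the appended suffix of events and steps. Composition is then concatenation, and concatenation is associative with unit $\nil$ by routine induction via $\mathsf{trace\mbox{-}elim}$. With that in place, both equalities are instances of the functor laws, holding definitionally or propositionally according to how $\Kf$ is presented.
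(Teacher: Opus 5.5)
Your proposal is correct and follows the paper's own argument, which reads both equations off as the identity and composition laws of the functor $\Kf:\Tf^{op}\to\mathbf{Type}$ written at the element level. Your remark on the composition convention is a valid sharpening that the paper's one-line sketch skips. Under the usual right-to-left reading of $\epsilon_1\circ\epsilon_2$ the displayed right-hand side is ill-typed, so $\epsilon_1$ must be the first extension in $\Tf$ (equivalently, $\circ$ is taken in $\Tf^{op}$). Your extra checks that $\Tf$ is a category are not needed, since the paper takes this as part of the definition.
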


\begin{proof}[Proof sketch]
These are exactly the identity and composition laws of the functor
$\Kf:\Tf^{op}\to\mathbf{Type}$, transported to element-level notation for
restriction maps.
\end{proof}

\subsection{Characterization theorem}

\begin{theorem}[Non-monotonicity Characterization]
\label{thm:nonmono-char}
Let $\Kf:\Tf^{op}\to\mathbf{Type}$ be a knowledge presheaf. Then the following are equivalent:
\begin{itemize}
  \item $\Kf$ is non-monotone under trace extension.
  \item There exists $\epsilon:\tau\to\tau'$ such that
  \[
  \restrict(\epsilon,-):\Kf(\tau')\to \Kf(\tau)
  \]
  is not surjective.
\end{itemize}
\end{theorem}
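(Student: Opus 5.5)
The proof is essentially definitional, so I will first make precise the notion of ``non-monotone under trace extension'' that the informal overview uses. Following the discussion in Section~\ref{sec:informal}, I will take \emph{monotone} to mean that every piece of knowledge available at a trace $\tau$ persists to every extension: for each $\epsilon:\tau\to\tau'$ and each $k\in\Kf(\tau)$ there is some $k'\in\Kf(\tau')$ with $\restrict(\epsilon,k')=k$. In words, $k'$ is an ``extension witness'' of $k$ along $\epsilon$. \emph{Non-monotone} is the negation of this: there exist $\epsilon:\tau\to\tau'$ and $k\in\Kf(\tau)$ with no such $k'$. Fixing this definition explicitly is the key step. The theorem is only as meaningful as this unfolding, and any stronger reading (for instance requiring that the witness $p$ itself inhabit $\Kf(\tau')$) would not typecheck in general.

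With the definition fixed, the two directions are a quantifier manipulation.

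For (i)$\Rightarrow$(ii), assume $\Kf$ is non-monotone. Then we obtain $\epsilon:\tau\to\tau'$ and $k\in\Kf(\tau)$ lying outside the image of $\restrict(\epsilon,-)$. That is exactly the failure of surjectivity of $\restrict(\epsilon,-):\Kf(\tau')\to\Kf(\tau)$.

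For (ii)$\Rightarrow$(i), suppose $\restrict(\epsilon,-)$ is not surjective. Classically we pick some $k\in\Kf(\tau)$ with no preimage, and this $k$ witnesses non-monotonicity along $\epsilon$.

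Restriction Functoriality is not needed for the equivalence itself. I would still remark on how it interacts with the definition. Identity extensions are always surjective, since $\restrict(id_\tau,k)=k$, so any witness $\epsilon$ in (ii) is necessarily a proper extension. Moreover, if $\epsilon_1\circ\epsilon_2$ fails surjectivity, then one of the factors must fail too, because a composite of surjections is surjective. This matches the intuition that invalidation happens at some concrete step $\tau\cdot e$. I expect to derive this single-step refinement, reducing via the $\step$ constructor and Trace--Proof Correspondence, and include it as a corollary rather than as part of the main proof.

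The main obstacle is constructive. The paper works in a constructive setting, and ``not surjective'' is a negated universal statement, whereas ``there is old knowledge with no extension witness'' is an existential one. Passing from the first to the second needs excluded middle in general. I would handle this in one of two ways.

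The first option is to state non-monotonicity itself as the negation $\neg\forall k\,\exists k'.\,\restrict(\epsilon,k')=k$. The equivalence is then a literal identity of propositions and holds constructively.

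The second option keeps the existential form. In that case I would note that (i)$\Rightarrow$(ii) is constructive, while (ii)$\Rightarrow$(i) uses classical reasoning in the metatheory, or decidability of membership in the image, which holds when the fibres $\Kf(\tau)$ are finite or discrete.

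I would adopt the first option as the official definition and mention the second as a remark.
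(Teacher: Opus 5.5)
Your proposal is correct and is essentially the paper's proof: the paper implicitly uses the same definition you make explicit (some $\epsilon:\tau\to\tau'$ and $k\in\Kf(\tau)$ with no $k'$ satisfying $\restrict(\epsilon,k')=k$) and reads off (non-)surjectivity in each direction, and its ``choose $k$ outside the image'' in ($\Leftarrow$) is exactly the classical step you flag. Two side remarks need correcting: discreteness of the fibres alone does not make ($\Leftarrow$) constructive, since you need decidable image membership \emph{and} an exhaustive search over $\Kf(\tau)$ (e.g.\ both fibres finite); likewise, ``some factor of a non-surjective composite is non-surjective'' is a classical De~Morgan step, so your single-step corollary is not constructive under your first option.
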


\begin{proof}
($\Rightarrow$) Assume non-monotonicity. Then for some extension
$\epsilon:\tau\to\tau'$ there exists a witness
$k\in\Kf(\tau)$ such that no $k'\in\Kf(\tau')$ satisfies
$\restrict(\epsilon,k')=k$. Hence $k$ is not in the image of
$\restrict(\epsilon,-)$, so that map is not surjective.

($\Leftarrow$) Assume there exists $\epsilon:\tau\to\tau'$ such that
$\restrict(\epsilon,-)$ is not surjective. Choose
$k\in\Kf(\tau)$ outside its image. By construction there is no
$k'\in\Kf(\tau')$ with $\restrict(\epsilon,k')=k$, i.e., no extension witness of
$k$ along $\epsilon$. Therefore knowledge does not persist under trace extension,
which is exactly non-monotonicity.
\end{proof}

\subsection{Conceptual theorem}

\begin{theorem}[Conceptual Consequence]
 \label{thm:conceptual-nonmono}
Non-monotonicity of trace-indexed knowledge is a direct effect of contravariance
in presheaf semantics, not a failure of monotonic structural rules in the
dependent proof calculus.
\end{theorem}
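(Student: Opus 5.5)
The plan is to make the informal statement precise as two claims and prove each from results already established. Claim (a): the structural rules of the dependent calculus (context formation, $\Pi$-formation, abstraction, application, conversion, and the trace formation and elimination rules of Section~\ref{sec:typing-op}) are monotone. Concretely, weakening and substitution are admissible, so any derivation $\Gamma\vdash p:A$ survives extension of the context and instantiation of variables. Claim (b): every instance of non-monotonicity in the sense of Theorem~\ref{thm:nonmono-char} is witnessed purely by the contravariant action $\restrict(\epsilon,-):\Kf(\tau')\to\Kf(\tau)$, and involves no change to the rule set. The conceptual theorem is then the conjunction of (a) and (b), together with the observation that the two claims speak about disjoint parameters: context extension $\Gamma\subseteq\Gamma'$ on one side, and index change $\tau\mapsto\tau'$ on the other.

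For (a), I would argue by induction on derivations, in the standard way. Each rule listed in Section~\ref{sec:typing-op} is stable under adding a fresh hypothesis $x:B$ and under substituting a well-typed term for a variable; the only nonstandard rule is $\mathsf{trace\mbox{-}elim}$, and its premises are themselves closed under weakening. Hence if $\Gamma\vdash p:\Kf(\tau)$ then $\Gamma,x:B\vdash p:\Kf(\tau)$. The derivation is never invalidated; its type index is fixed at $\tau$.

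For (b), fix $\epsilon:\tau\to\tau'$ and $k\in\Kf(\tau)$ with no preimage, as provided by Theorem~\ref{thm:nonmono-char}. The key point is that failure to derive $p:\Kf(\tau')$ is not the retraction of a derivation. It is the absence of an inhabitant of a different type. By (a), the judgment $\Gamma\vdash p:\Kf(\tau)$ persists. By Restriction Functoriality, the only map relating the two fibers runs from $\tau'$ to $\tau$, so nothing can transport $k$ forward. The missing preimage is therefore entirely a feature of the presheaf's contravariance, which is exactly the ``index mismatch'' reading of Section~\ref{sec:informal}. To show the phenomenon is genuinely semantic, I would also exhibit a concrete presheaf, the revocation example with $\mathsf{Auth}(c,\tau\cdot\mathsf{Revoke}(c))$ empty and $\mathsf{Auth}(c,\tau)$ inhabited. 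This gives a non-surjective restriction inside the same monotone calculus.

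The main obstacle is that ``direct effect'' and ``not a failure'' are not formal notions. The delicate step is fixing a precise reading: I take it to mean that the set of derivable judgments is closed under weakening and substitution, while non-monotonicity is equivalent (via Theorem~\ref{thm:nonmono-char}) to a property of the restriction maps alone. I expect to spend most of the care there, and to keep the weakening argument itself at proof-sketch level, since it is routine for MLTT-style systems.
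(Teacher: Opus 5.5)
Your proposal follows the paper's argument: Theorem~\ref{thm:nonmono-char} reduces non-monotonicity to non-surjectivity of some $\restrict(\epsilon,-)$, which is by definition the contravariant action of $\Kf$, and your claim (a) spells out, via Theorems~\ref{thm:weakening} and~\ref{thm:substitution}, the clause ``proof-theoretic monotonicity remains intact'' that the paper's sketch asserts without argument. Two small fixes: the direction $\Kf(\tau')\to\Kf(\tau)$ comes from $\Kf$ being a functor on $\Tf^{op}$, not from the Restriction Functoriality laws, and what blocks carrying $k$ forward is non-surjectivity itself (no $k'$ restricts to $k$), not the absence of any function between the fibers; also, because a presheaf with an inhabited fiber at $\tau'$ must have an inhabited fiber at every prefix of $\tau'$, your revocation model should read $\mathsf{Auth}(c,\tau)$ as ``$\mathsf{Revoke}(c)$ does not occur in $\tau$'' rather than ``issued and not yet revoked'', since the latter is empty on $\nil$ but inhabited on its extension by $\mathsf{Issue}(c)$.
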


\begin{proof}[Proof sketch]
By Theorem~\ref{thm:nonmono-char}, non-monotonicity is equivalent to
non-surjectivity of some restriction map. Such maps are induced by contravariant
functorial action along extensions in $\Tf$. Therefore the source of
non-monotonicity is semantic contravariance, while proof-theoretic monotonicity
remains intact.
\end{proof}

\subsection{Stability criterion and monotone fragment}

\begin{proposition}[Prefix-Stability Criterion]
\label{prop:prefix-stability}
Let $\Kf:\Tf^{op}\to\mathbf{Type}$ be a knowledge presheaf. If every restriction
map
\[
\restrict(\epsilon,-):\Kf(\tau')\to\Kf(\tau)
\]
is surjective, then knowledge is extension-stable in the sense that every witness
at $\tau$ is the restriction of some witness at any extension $\tau'$.
\end{proposition}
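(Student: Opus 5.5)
The plan is to unfold the definitions directly, since the proposition is essentially the contrapositive of the ($\Leftarrow$) direction of Theorem~\ref{thm:nonmono-char}. The proof starts by fixing an arbitrary extension $\epsilon:\tau\to\tau'$ in $\Tf$ and an arbitrary witness $k\in\Kf(\tau)$. By hypothesis $\restrict(\epsilon,-):\Kf(\tau')\to\Kf(\tau)$ is surjective, so unfolding surjectivity yields some $k'\in\Kf(\tau')$ with $\restrict(\epsilon,k')=k$. Since $\epsilon$ and $k$ were arbitrary, every witness at $\tau$ is the restriction of a witness at every extension $\tau'$, which is exactly extension-stability.

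Second, I would record the link to Theorem~\ref{thm:nonmono-char} for context. If $\Kf$ were non-monotone, that theorem would produce some $\epsilon$ whose restriction map is not surjective, contradicting the hypothesis. Hence $\Kf$ lies in the monotone fragment. This second route is logically redundant with the direct argument, but it shows that the criterion is the negation of the characterization.

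The only point needing care, and the mild obstacle, is the phrase ``any extension $\tau'$''. There may be several morphisms $\tau\to\tau'$ in $\Tf$, or none, depending on how the free category is presented. If $\tau'$ is not an extension of $\tau$, there is nothing to show. If there are several morphisms, the argument is applied to each $\epsilon$ separately. In a constructive reading, surjectivity should be understood as a function $\Pi(k:\Kf(\tau)).\,\Sigma(k':\Kf(\tau')).\,\restrict(\epsilon,k')=k$. The stability witness is then obtained by applying that function, so the proof stays effective without invoking choice. Finally, I would note via Restriction Functoriality that surjectivity is only needed on generating one-step extensions $\step$, because composites of surjections are surjective. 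This offers a practical sufficient check, but it is not required for the proposition as stated.
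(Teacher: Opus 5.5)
Your direct argument (fix $\epsilon:\tau\to\tau'$ and $k\in\Kf(\tau)$, use surjectivity to get $k'$ with $\restrict(\epsilon,k')=k$, then generalize over $\epsilon$ and $k$) is exactly the paper's proof, so the proposal is correct. One small slip in your framing: the proposition is the contrapositive of the ($\Rightarrow$) direction of Theorem~\ref{thm:nonmono-char} (non-monotone implies some non-surjective restriction), not of ($\Leftarrow$); your own second paragraph in fact uses ($\Rightarrow$).
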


\begin{proof}[Proof sketch]
Fix $\epsilon:\tau\to\tau'$ and $k\in\Kf(\tau)$. Surjectivity gives
$k'\in\Kf(\tau')$ with $\restrict(\epsilon,k')=k$. Since this holds for arbitrary
$\epsilon$ and $k$, extension-stability follows.
\end{proof}

Proposition~\ref{prop:prefix-stability} identifies a monotone semantic fragment
inside the general DEKL setting and complements
Theorem~\ref{thm:nonmono-char}.

\section{Full Categorical Semantics}
\label{sec:catsem}

\subsection{Model decomposition}

DEKL is interpreted using two connected categorical components:
\begin{itemize}
  \item a CwF-style interpretation for dependent syntax ($\Gamma\vdash A:U$, $\Gamma\vdash t:A$);
  \item a trace semantics where a transition system generates a free trace category.
\end{itemize}

\begin{theorem}[Free Trace Category]
 \label{thm:free-trace-category}
For a fixed transition system $(\State,\Event,\Step)$, states and finite traces
form the free category generated by the underlying transition graph: objects are
states, morphisms are traces, identity morphisms are empty traces, and
composition is trace concatenation.
\end{theorem}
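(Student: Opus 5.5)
We verify the category axioms for the proposed structure and then its universal property as the free category on the transition graph $G$. The graph $G$ has vertices the inhabitants of $\State$ and, for each $\sigma,\sigma'$, edges from $\sigma$ to $\sigma'$ given by pairs $(e,\pi)$ with $\pi:\Step(\sigma,e,\sigma')$. Set $\mathrm{Hom}(\sigma_0,\sigma_n):=\FinTrace(\sigma_0,\sigma_n)$ and $id_\sigma:=\nil(\sigma)$. Composition is concatenation $\tau\cdot\rho$ for $\tau:\FinTrace(\sigma_0,\sigma_1)$ and $\rho:\FinTrace(\sigma_1,\sigma_2)$. We define it by $\mathsf{trace\mbox{-}elim}$ on the \emph{second} argument, matching the snoc-style constructor $\step$:
\[
\tau\cdot\nil(\sigma_1)=\tau,\qquad \tau\cdot\step(\rho,e,\pi)=\step(\tau\cdot\rho,e,\pi).
\]
The typing rules for $\nil$ and $\step$ show that the result lies in $\FinTrace(\sigma_0,\sigma_2)$. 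So the endpoint indices are respected, and we avoid the ``append at the front'' direction, which would need an auxiliary lemma.

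Next come the laws. Right unit, $\tau\cdot\nil=\tau$, holds definitionally. Left unit, $\nil(\sigma_0)\cdot\rho=\rho$, follows by induction on $\rho$: the $\nil$ case is immediate, and the $\step$ case rewrites with the induction hypothesis under $\step$. Associativity, $(\tau\cdot\rho)\cdot\upsilon=\tau\cdot(\rho\cdot\upsilon)$, follows by induction on $\upsilon$. Both sides reduce to $\tau\cdot\rho$ when $\upsilon=\nil$. When $\upsilon=\step(\upsilon_0,e,\pi)$, both sides unfold to $\step(-,e,\pi)$ of the two sides for $\upsilon_0$, and we close with the induction hypothesis. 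These equalities are propositional rather than definitional. We therefore read ``category'' up to the identity type, or in the set-level semantics of Section~\ref{sec:catsem}, which is consistent with the paper's use of conversion and extensional reasoning.

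For freeness, we exhibit the inclusion $\iota:G\to U(\mathcal{F})$ sending an edge $(e,\pi)$ to the one-step trace $\step(\nil(\sigma),e,\pi)$. Let $\mathcal{C}$ be a category and $h:G\to U(\mathcal{C})$ a graph morphism. We define the functor $\bar h$ on objects by $h$, and on morphisms by recursion, again via $\mathsf{trace\mbox{-}elim}$:
\[
\bar h(\nil(\sigma))=id_{h\sigma},\qquad \bar h(\step(\tau,e,\pi))=h(e,\pi)\circ\bar h(\tau).
\]
Preservation of identities is by definition. Preservation of composition, $\bar h(\tau\cdot\rho)=\bar h(\rho)\circ\bar h(\tau)$, follows by induction on $\rho$ using associativity in $\mathcal{C}$. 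The construction gives $\bar h\circ\iota=h$ because $id$ is a unit. For uniqueness, any functor $F$ with $F\circ\iota=h$ must agree with $\bar h$ on $\nil$ by identity preservation. Every $\step(\tau,e,\pi)$ equals $\tau\cdot\iota(e,\pi)$, a fact proved by one unfolding together with the left unit law, so functoriality forces $F(\step(\tau,e,\pi))=h(e,\pi)\circ F(\tau)$, and induction on traces concludes. Trace Completeness and the Trace--Proof Correspondence of Section~\ref{sec:typing-op} are not needed here. They only confirm that the hom-sets are nonempty exactly on reachable pairs, that is, that the free category's hom-sets present reachability.

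The main obstacle is bookkeeping the dependent indices. The eliminator as stated in the paper is written with a non-indexed motive $P$, but concatenation and $\bar h$ require the motive to vary over the endpoint $\sigma_2$ (and $\sigma_1$). My first step will therefore be to restate $\mathsf{trace\mbox{-}elim}$ with a motive $P:\Pi(\sigma_2).\,\FinTrace(\sigma_1,\sigma_2)\to U$ for a fixed source, which is the standard indexed-inductive eliminator. After that, each inductive step is a routine case analysis. A second, minor point is fixing the direction of composition so that it matches the paper's morphisms, where a trace from $\sigma_0$ to $\sigma_n$ goes from $\sigma_0$ to $\sigma_n$ and composition is diagrammatic concatenation. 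I will state the convention explicitly: $\rho\circ\tau:=\tau\cdot\rho$.
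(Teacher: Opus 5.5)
Your proposal is correct and follows the paper's route. The paper gets the category laws from the unit and associativity properties of concatenation, and freeness from traces being generated by single steps with no further equations; you make that sketch rigorous with an indexed eliminator and an explicit universal-property check (existence of $\bar h$ by recursion, uniqueness by induction). One cosmetic slip: the identity $\step(\tau,e,\pi)=\tau\cdot\iota(e,\pi)$ needs only two definitional unfoldings, the second being your right-unit equation $\tau\cdot\nil(\sigma_1)=\tau$, not the left unit law.
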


\begin{proof}[Proof sketch]
Category laws follow from properties of concatenation (associativity) and empty
traces (identities). Freeness follows because morphisms are generated only by
primitive transitions and finite composition, with no additional equations beyond
category axioms.
\end{proof}

\subsection{Proposition layer interpretation}

The proposition layer $\Prop$ is interpreted via subobject-style structure in the
presheaf setting, with fixed-point operators available at proposition level. This
supports temporal and recursive property specifications while preserving
substitution stability.

\subsection{Interpretation of \texorpdfstring{$K$}{K}}

The knowledge layer is interpreted in
\[
\widehat{\Tf}=[\Tf^{op},\mathbf{Set}],
\]
with $\Kf$ as a presheaf object. Syntactic restriction operations correspond exactly to functor action on extension morphisms. Thus:
\begin{center}
syntax of trace-indexed knowledge $\Longleftrightarrow$ categorical action of $\Kf$.
\end{center}

\subsection{Adequacy perspective}

\begin{lemma}[Adequacy, soundness direction]
\label{lem:adequacy-sound}
For any derivable term
$t:\FinTrace(\sigma_0,\sigma_n)$, its interpretation
$\llbracket t\rrbracket$ is a morphism $\sigma_0\to\sigma_n$ in the free trace
category.
\end{lemma}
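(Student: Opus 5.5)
The plan is to define the interpretation $\llbracket-\rrbracket$ on trace terms by structural recursion, mirroring the trace eliminator $\mathsf{trace\mbox{-}elim}$. The result is then proved by induction on the typing derivation of $t:\FinTrace(\sigma_0,\sigma_n)$, strengthened to the following statement: for any context $\Gamma$ and any closed instance of the endpoint states, a derivation of $\Gamma\vdash t:\FinTrace(\sigma_0,\sigma_n)$ yields $\llbracket t\rrbracket\in\mathrm{Hom}(\llbracket\sigma_0\rrbracket,\llbracket\sigma_n\rrbracket)$ in the free trace category of Theorem~\ref{thm:free-trace-category}. Concretely, set $\llbracket\nil(\sigma)\rrbracket = id_{\llbracket\sigma\rrbracket}$, the empty trace. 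Set $\llbracket\step(\tau,e,\pi)\rrbracket = g_{e,\pi}\circ\llbracket\tau\rrbracket$, where $g_{e,\pi}:\llbracket\sigma_1\rrbracket\to\llbracket\sigma_2\rrbracket$ is the generating edge of the transition graph determined by the witness $\pi:\Step(\sigma_1,e,\sigma_2)$.

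The induction has three cases, according to the last rule used.
\begin{itemize}
\item \textbf{$\nil$ rule.} The typing forces $\sigma_0=\sigma_n=\sigma$, and the identity is a morphism $\sigma\to\sigma$.
\item \textbf{$\step$ rule.} The induction hypothesis gives $\llbracket\tau\rrbracket:\sigma_0\to\sigma_1$. The premise $\pi$ provides an edge $\sigma_1\to\sigma_2$ in the underlying graph, so composing yields a morphism $\sigma_0\to\sigma_2$. This matches the conclusion type, and is exactly how the proof of the Trace--Proof Correspondence theorem reads $\step$ as "prefix followed by one legal transition."
\item \textbf{Conversion rule.} From $t:A$ and $A\equiv B$ with $B=\FinTrace(\sigma_0',\sigma_n')$, we need the interpretation to respect definitional equality. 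That is, $\Gamma\vdash\sigma\equiv\sigma':\State$ must imply $\llbracket\sigma\rrbracket=\llbracket\sigma'\rrbracket$, and $t\equiv u$ must imply $\llbracket t\rrbracket=\llbracket u\rrbracket$.
\end{itemize}
Other rules, such as application and variables, reduce to these cases after normalization. Alternatively, one works in the CwF interpretation, where substitution is sent to reindexing and the well-typedness of the morphism is preserved.

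The main obstacle is the conversion case, together with terms that are not in canonical constructor form (open terms, or eliminations producing traces). Here I would first prove a soundness lemma for the CwF-style interpretation: definitionally equal terms and types receive equal interpretations. This is the standard soundness of the model for the $\beta\eta$ rules, extended to the trace constructors, using the fact that the free category satisfies only the category axioms, so trace equations like associativity of concatenation hold strictly. Given that lemma, the conversion case is immediate. For open terms, the statement is read relative to an environment assigning states to variables, and the interpretation lands in hom-sets indexed by those environments. The endpoints then remain correct by the substitution lemma of the CwF.
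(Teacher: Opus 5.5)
Your proposal is correct and follows the paper's argument: send $\nil(\sigma)$ to $id_\sigma$ and $\step(\tau,e,\pi)$ to $\llbracket\tau\rrbracket$ composed with the generating edge $\llbracket e,\pi\rrbracket$, then read off domain and codomain from the typing premises by induction. The paper stops at induction on these two constructors, so your treatment of \textsc{Conv} and of non-canonical or open trace terms (via soundness of the interpretation for definitional equality) makes explicit cases the paper leaves implicit; one small fix is that your environments must also assign morphisms to trace-typed variables, not only states to state variables.
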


\begin{proof}[Proof sketch]
By induction on trace constructors. The base case maps $\nil$ to identity, and
the step case maps $\step$ to composition with a generating transition. Domain and
codomain are preserved by typing premises.
\end{proof}

\begin{lemma}[Adequacy, completeness direction]
\label{lem:adequacy-complete}
For any morphism $m:\sigma_0\to\sigma_n$ in the free trace category, there exists
a term $t:\FinTrace(\sigma_0,\sigma_n)$ such that $\llbracket t\rrbracket=m$ up
to trace equations.
\end{lemma}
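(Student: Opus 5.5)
The plan is to argue by induction on the structure of $m$ as a morphism of the free category, using the freeness established in Theorem~\ref{thm:free-trace-category}. By that theorem, every morphism $m:\sigma_0\to\sigma_n$ is, modulo the category axioms, a finite composite of generating transitions. So I first fix a normal form: $m$ is either an identity $id_{\sigma_0}$ (and then $\sigma_n=\sigma_0$), or $m = g\circ m_0$ with $m_0:\sigma_0\to\sigma_k$ and $g:\sigma_k\to\sigma_n$ a generating edge labelled by some event $e$ with $\pi:\Step(\sigma_k,e,\sigma_n)$. This right-associated (``snoc'') shape is chosen to match the $\step$ constructor, which appends a transition at the end. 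I will justify the normal form by associativity and the identity laws, and then carry out the induction on the length of the normal form, i.e.\ the number of generators.

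In the base case I take $t=\nil(\sigma_0):\FinTrace(\sigma_0,\sigma_0)$. Lemma~\ref{lem:adequacy-sound} interprets $\nil$ as the identity, so $\llbracket t\rrbracket = id_{\sigma_0}=m$.

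In the step case, the induction hypothesis applied to $m_0$ gives a term $t_0:\FinTrace(\sigma_0,\sigma_k)$ with $\llbracket t_0\rrbracket = m_0$ up to trace equations. The trace formation rule for $\step$ then yields $t=\step(t_0,e,\pi):\FinTrace(\sigma_0,\sigma_n)$. By the step clause of Lemma~\ref{lem:adequacy-sound}, $\llbracket t\rrbracket$ is $\llbracket t_0\rrbracket$ composed with the generator for $(e,\pi)$, so $\llbracket t\rrbracket = g\circ m_0 = m$. Trace equations are preserved because composition with a fixed generator respects them. This step mirrors the proof of Trace Completeness, read at the level of morphisms instead of reachability.

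The main obstacle is the normal-form step, specifically making ``up to trace equations'' precise. A morphism of the free category may be presented as an arbitrary bracketing of composites, possibly padded with identities. I must show that each such presentation equals a unique snoc-list of generators. My approach is to define a flattening map from composite expressions to lists of generators, prove that it is invariant under associativity and the unit laws, and note that freeness (no equations beyond the category axioms) makes the resulting list well defined. A secondary subtlety is that the generator $g$ must come with an actual witness $\pi$ in $\Step$. I will take the generating graph to have edges exactly the triples $(e,\pi)$, so that each generator carries its witness by construction.
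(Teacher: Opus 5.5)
Your proposal is correct and follows essentially the same route as the paper's sketch, which inducts on the generation of $m$ in the free category, sending identities to $\nil$ and composites to iterated $\step$-constructions with transition witnesses. Your snoc normal form and flattening argument make explicit the rebracketing that the paper's phrase ``iterated $\step$-construction'' leaves implicit. For this existence claim you only need that every morphism admits some such decomposition, not that it is unique.
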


\begin{proof}[Proof sketch]
By induction on the generation of $m$ in the free category. Identity morphisms
are represented by $\nil$; composite morphisms are represented by iterated
$\step$-construction with transition witnesses.
\end{proof}

\begin{theorem}[Trace Adequacy]
\label{thm:adequacy}
For finite traces, syntactic reachability witnesses and semantic morphisms in the
free trace category are mutually representable, up to definitional equality of
trace terms and category equations.
\end{theorem}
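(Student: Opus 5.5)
The plan is to assemble Theorem~\ref{thm:adequacy} from the two adequacy lemmas, and then check that the two translations are mutually inverse up to the stated equivalences. The interpretation $\llbracket-\rrbracket$ gives the map from syntax to semantics. By Lemma~\ref{lem:adequacy-sound}, every derivable $t:\FinTrace(\sigma_0,\sigma_n)$ is sent to a morphism $\sigma_0\to\sigma_n$ in the free trace category of Theorem~\ref{thm:free-trace-category}.

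For the reverse direction, I will make the construction in Lemma~\ref{lem:adequacy-complete} explicit as a readback function $R$. By freeness, each morphism $m$ has a normal form as a finite list of generating transitions $(e_1,\pi_1),\dots,(e_k,\pi_k)$. The readback is then defined by
\[
R(m)=\step(\cdots\step(\nil(\sigma_0),e_1,\pi_1)\cdots,e_k,\pi_k).
\]
The typing rules for $\nil$ and $\step$ show that $R(m):\FinTrace(\sigma_0,\sigma_n)$. Lemma~\ref{lem:adequacy-complete} already gives $\llbracket R(m)\rrbracket=m$ up to category equations, which is one of the two round trips.

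The other round trip is $R(\llbracket t\rrbracket)\equiv t$, and I will prove it by $\mathsf{trace\mbox{-}elim}$ on $t$.
\begin{itemize}
  \item In the case $t=\nil(\sigma)$, the interpretation is $\llbracket t\rrbracket=id_\sigma$, whose normal form is the empty list, so $R$ returns $\nil(\sigma)$.
  \item In the case $t=\step(\tau,e,\pi)$, the interpretation is $\llbracket t\rrbracket=g_{e,\pi}\circ\llbracket\tau\rrbracket$. Its normal form is the normal form of $\llbracket\tau\rrbracket$ with $(e,\pi)$ appended, so $R$ yields $\step(R\llbracket\tau\rrbracket,e,\pi)$. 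The induction hypothesis together with congruence of $\equiv$ under $\step$ then gives $t$.
\end{itemize}
The conversion rule of Section~\ref{sec:syntax} transports the typing across these equalities.

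I expect the main obstacle to be well-definedness of $R$ on equivalence classes. Two presentations of $m$ that differ only by associativity or identity laws must be read back to definitionally equal terms. I would first try to handle this by proving that the normal form in the free category is unique, using the free-category property from Theorem~\ref{thm:free-trace-category} that morphisms are exactly finite paths of generators. Then I would define $R$ on paths rather than on presentations. Concatenation of paths corresponds to a derived operation on traces, defined by recursion on the second argument using $\step$. The remaining task is to show that this operation is associative and unital up to $\equiv$. Both properties follow by routine induction on traces, after which mutual representability is immediate.
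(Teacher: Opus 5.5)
Your proposal follows the paper's route: the forward map is the interpretation from Lemma~\ref{lem:adequacy-sound}, and the backward map rebuilds a term from the free-category path decomposition as in Lemma~\ref{lem:adequacy-complete}; the paper only asserts that these are inverse up to definitional equality and category equations, whereas you check both round trips explicitly through the readback $R$ and uniqueness of path normal forms. One small correction: $R(\llbracket t\rrbracket)\equiv t$ is a judgmental equality, so it needs meta-level structural induction on constructor-built (closed) traces, with canonicity reducing other closed terms to that form, not the internal eliminator $\mathsf{trace\mbox{-}elim}$, which produces terms rather than definitional equalities.
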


\begin{proof}[Proof sketch]
Immediate from Lemma~\ref{lem:adequacy-sound} and
Lemma~\ref{lem:adequacy-complete}. Concretely, the forward map sends a term to its
interpreted morphism, and the backward map reconstructs a term from a free-category
decomposition. These two maps are inverse up to the definitional equality on trace
terms and category equations induced by associativity/identity.
\end{proof}

\section{Meta-theory}
\label{sec:meta}

\subsection{Soundness}

\begin{theorem}[Semantic Soundness]
 \label{thm:soundness}
For any judgment $\Gamma\vdash t:A$ derivable in DEKL, its interpretation is valid
in the CwF/presheaf model and preserves substitution and computation.
\end{theorem}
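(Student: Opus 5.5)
We prove Theorem~\ref{thm:soundness} by the standard route for CwF semantics. First we fix the model. It is the CwF of presheaves over $\Tf$, that is $\widehat{\Tf}=[\Tf^{op},\mathbf{Set}]$:
\begin{itemize}
\item contexts are presheaves;
\item types in context $\Gamma$ are presheaves over the category of elements $\int\llbracket\Gamma\rrbracket$;
\item terms are sections.
\end{itemize}
Context extension is the dependent sum in the presheaf topos. Dependent products $\Pi(x:A).B$ are the usual presheaf exponentials over $\int\llbracket\Gamma\rrbracket$. The universes $\Uc_i$ and $\TypeLi{i}$ are interpreted by Hofmann--Streicher lifting of a Grothendieck-universe hierarchy, and $\Prop$ by the subobject classifier. $\FinTrace(\sigma_0,\sigma_n)$ is interpreted via the hom-sets of the free trace category of Theorem~\ref{thm:free-trace-category}, as the constant family of morphisms $\sigma_0\to\sigma_n$. $\InfTrace$ is interpreted as the final coalgebra of $X\mapsto \State\times(\Event\times X)$. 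Knowledge families $\Kf$ are interpreted as presheaf objects, with $\restrict$ given by the functor action.

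The main induction runs simultaneously on derivations of all four judgment forms: $\vdash\Gamma\ \mathsf{ctx}$, $\Gamma\vdash A:U$, $\Gamma\vdash t:A$, and the two equality judgments. We define the interpretation as a partial function on raw syntax, in Streicher's style, and show it is defined and well-typed on derivable judgments. The cases are as follows.
\begin{itemize}
\item The context, $\Pi$, $\lambda$ and application rules use the CwF structure directly.
\item The conversion rule needs the equality judgments to be interpreted as actual equalities, which the same induction provides.
\item The trace formation rules for $\nil$ and $\step$ are handled exactly as in Lemma~\ref{lem:adequacy-sound}: identity, and postcomposition with a generator.
\item $\mathsf{trace\mbox{-}elim}$ is interpreted by the initiality of the inductive family, i.e.\ freeness of $\Tf$. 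Its computation rules then hold by construction.
\item Guarded corecursion on $\InfTrace$ is interpreted by finality of the coalgebra.
\item The fixed-point operators at the $\Prop$ layer are interpreted via Knaster--Tarski on the complete lattice of subobjects. Monotonicity of the operators is required here, and is assumed as a side condition of the syntax.
\end{itemize}

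Preservation of substitution is a separate lemma, proved first by induction on syntax. It states that $\llbracket t[u/x]\rrbracket = \llbracket t\rrbracket\{\llbracket u\rrbracket\}$, with the substitution on the right taken in the CwF, and similarly for types. It is needed in the application case, where $B[a/x]$ appears. Preservation of computation then follows: $\beta$ holds by the exponential adjunction, $\eta$ (if present) likewise, and trace recursion by the universal property above. Together these show that definitional equality is validated.

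The main obstacle is coherence, namely the strictness of substitution. Presheaf models interpret substitution by pullback, which is only pseudo-functorial. I would first try the standard fix of working with types as presheaves on the category of elements, where reindexing is strict by precomposition. I would also check that the Hofmann--Streicher universes and the $\Prop$-level fixed points are stable under reindexing; for fixed points this needs the operators to be natural in the context.
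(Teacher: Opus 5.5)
Your proposal follows essentially the same route as the paper's proof: a rule-by-rule induction in the CwF/presheaf model, with $\nil$ and $\step$ sent to identities and to composition with generators in the free trace category, $\restrict$ sent to the functorial action of $\Kf$, and a separate substitution lemma (the paper's Lemma~\ref{lem:interp-subst}) used for the application and computation cases. What you add is precision the sketch lacks rather than a different argument: a concrete model with strict reindexing, an induction that also interprets the equality judgments (which \textsc{Conv} genuinely needs), and explicit treatment of $\mathsf{trace\mbox{-}elim}$, $\InfTrace$, the universes and the $\Prop$-level fixed points, where your monotonicity side condition is a real requirement that the paper leaves implicit.
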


\begin{proof}[Proof sketch]
We proceed by induction on the derivation of $\Gamma\vdash t:A$.

\textbf{Step 1 (contexts and substitutions).}
Interpretation of contexts and substitutions follows CwF structure. Well-formed
context extension is mapped to comprehension objects, and substitution composition
is respected by definition.

\textbf{Step 2 (core dependent rules).}
Rules for variables, conversion, $\Pi$-formation, introduction, and elimination
are preserved by the corresponding universal properties in the semantic model.
Beta-equality is validated by functorial substitution.

\textbf{Step 3 (trace constructors).}
Trace terms are interpreted in the free trace category: $\nil$ as identities and
$\step$ as composition with generating transitions. Typing premises ensure domain
and codomain compatibility for each interpreted morphism.

\textbf{Step 4 (knowledge-indexed rules).}
Restriction operators are interpreted as action of $\Kf:\Tf^{op}\to\mathbf{Type}$.
Functoriality yields identity and composition equations, matching the syntactic
restriction laws.

\textbf{Step 5 (compatibility and conclusion).}
By Lemma~\ref{lem:interp-subst}, interpretation commutes with substitution; by
construction of each rule case above, computation clauses are preserved. Therefore
every derivable judgment has a valid semantic interpretation, establishing
soundness.
\end{proof}

\subsection{Consistency}

\begin{theorem}[Consistency]
 \label{thm:consistency}
There is no closed term $\vdash t:\bot$.
\end{theorem}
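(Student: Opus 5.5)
The plan is the standard semantic route: derive consistency from Theorem~\ref{thm:soundness} by exhibiting one model in which $\bot$ is interpreted by an empty object. Concretely, I would use the CwF/presheaf model of Section~\ref{sec:catsem}. Contexts are interpreted as presheaves on $\Tf$ (or, for the computational layer, as sets), types as dependent presheaves, and terms as global sections. Closed terms $\vdash t:A$ then correspond to elements of $\llbracket A\rrbracket(\mathbf{1})$, that is, global sections of $\llbracket A\rrbracket$ over the terminal context.

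\textbf{Step 1.} Fix the interpretation of $\bot$. Taking $\bot$ to be the empty inductive type in $\Uc_0$ (or the bottom element of $\Prop$), I would interpret it as the initial presheaf $\mathbf{0}$, constant at the empty set. It has no global sections, since $\mathrm{Hom}(\mathbf{1},\mathbf{0})=\emptyset$ whenever $\mathbf{1}(\tau)=\{*\}$ is nonempty at some object. So I need $\Tf$ to have at least one object. If $\State$ is empty, $\Tf$ is empty and presheaves degenerate, so for that case I would fall back to the plain $\mathbf{Set}$ model, which the CwF component already provides.

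\textbf{Step 2.} Check that the eliminator for $\bot$, namely $\bot$-elim into an arbitrary family, is validated. This holds because $\mathbf{0}$ is strict initial in a presheaf topos: any map into $\mathbf{0}$ forces its domain to be $\mathbf{0}$, so the dependent eliminator is uniquely defined. I would add this as a case of Step 2 of the soundness induction.

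\textbf{Step 3.} Argue by contradiction. Suppose $\vdash t:\bot$ is derivable. By Theorem~\ref{thm:soundness}, $\llbracket t\rrbracket$ is a global section of $\llbracket\bot\rrbracket=\mathbf{0}$. By Step 1 no such section exists, so no such $t$ exists.

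\textbf{Main obstacle.} The difficulty is that soundness must really cover every rule of the calculus. That includes the fixed-point operators at the $\Prop$ layer, coinductive $\InfTrace$, the universe hierarchy, and the conversion rule. An unrestricted fixed point, say $\mu X.X$ at $\Prop$, would inhabit every proposition and make $\bot$ inhabited. So I would first restrict to least and greatest fixed points of monotone (strictly positive) operators on the complete lattice of subpresheaves, using Knaster--Tarski. With that restriction, $\mu X.X$ is interpreted as the least subobject, which is empty, rather than as a proof of everything. I would also interpret $\InfTrace$ as the final coalgebra, which exists in $\mathbf{Set}$, so that guarded corecursion is sound. For the universes I would assume a tower of Grothendieck universes in the metatheory. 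With these choices, consistency reduces to the relative statement ``consistent if ZFC with enough inaccessibles is,'' which is the honest form of the claim.
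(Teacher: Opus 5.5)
Your proposal is correct and follows the same route as the paper: you interpret $\bot$ as the initial (empty) object, observe that there is no global section $1\to\llbracket\bot\rrbracket$, and transfer this impossibility to syntax via Theorem~\ref{thm:soundness}. Your extra checks are obligations the paper's sketch leaves implicit: a non-empty base category so that $\mathbf{1}\not\cong\mathbf{0}$, validation of $\bot$-elimination, and restricting $\Prop$-level fixed points to monotone operators. One small slip: the $\mu X.X$ example does not show the danger, since the identity operator is already monotone and its least fixed point is simply empty; the real threats are term-level general recursion such as $\mathsf{fix}(\lambda x.x)$ or non-positive fixed points such as $X\cong(X\to\bot)$.
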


\begin{proof}[Proof sketch]
In the semantic model, $\bot$ is interpreted as the initial/empty subobject. A
closed inhabitant would give a global element of the empty object, impossible by
set-theoretic semantics. Soundness transfers this impossibility to syntax.
\end{proof}

\paragraph{Proof-obligations checklist (consistency).}
For a fully expanded proof, one discharges:
\begin{itemize}
  \item \textbf{Initial-object interpretation.} $\llbracket\bot\rrbracket$ is
  initial in each semantic fiber.
  \item \textbf{Global section exclusion.} There is no section
  $1\to\llbracket\bot\rrbracket$.
  \item \textbf{Syntactic transfer.} By Theorem~\ref{thm:soundness}, semantic
  emptiness implies absence of closed inhabitants in syntax.
\end{itemize}

\subsection{Normalization}

The computational layer retains the normalization behavior of standard dependent
type theory. Trace extensions and knowledge indexing do not introduce
non-terminating computational redexes into the core term calculus. Together with
Theorem~\ref{thm:soundness} and Theorem~\ref{thm:consistency}, this keeps the
logical kernel standard while relocating dynamic effects to trace-indexed
semantics.

\paragraph{Proof-obligations checklist (normalization).}
For mechanized normalization, the main obligations are:
\begin{itemize}
  \item \textbf{Reduction closure.} Every reduction rule preserves typing.
  \item \textbf{Measure decrease.} Core computational reductions decrease a
  reducibility measure.
  \item \textbf{Trace conservativity.} Trace constructors do not introduce new
  non-terminating redex patterns in the core calculus.
  \item \textbf{Index-erasure compatibility.} Erasing trace indices preserves
  the reducibility argument used by the base normalization theorem.
\end{itemize}

\subsection{Subject reduction and canonicity perspective}

Beyond consistency, two standard quality checks are relevant for publication-level
metatheory:
\begin{itemize}
  \item \textbf{Subject reduction under computational rules.} Reduction in the
  computational layer preserves typing and does not break trace indices.
  \item \textbf{Canonicity in closed base types.} Closed normal forms at base
  computational types (e.g., naturals, finite traces) are canonical constructors.
\end{itemize}
In DEKL, these properties are inherited from the underlying dependent calculus and
preserved by conservative addition of trace constructors and presheaf-indexed
knowledge families.

\subsection{Explicit structural metatheorems}

For completeness, we state the usual structural results in a form directly usable
for mechanization.

\begin{theorem}[Weakening]
\label{thm:weakening}
If $\Gamma\vdash t:A$ and $\Gamma\vdash B:U$, then
\[
\Gamma,x:B\vdash t:A.
\]
\end{theorem}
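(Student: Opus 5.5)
The plan is to prove a slightly stronger, generalized statement by induction on the derivation of the first premise. The generalization is needed because binders inside $t$ extend the context on the right. The generalized claim is: if $\Gamma,\Delta\vdash J$ and $\Gamma\vdash B:U$, with $x$ fresh for $\Gamma$ and $\Delta$, then $\Gamma,x:B,\Delta\vdash J$. Here $J$ ranges over all judgment forms of Section~\ref{sec:syntax}: $A:U$, $t:A$, $t\equiv u:A$, $A\equiv B:U$, and context well-formedness. Weakening must be stated for all of these simultaneously, since typing and definitional equality are mutually dependent through the conversion rule. Theorem~\ref{thm:weakening} is then the instance $\Delta=\cdot$. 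We take it as standard that variables are named up to $\alpha$-equivalence, so freshness of $x$ can always be arranged.

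First I will establish context well-formedness: if $\vdash\Gamma,\Delta\ \mathsf{ctx}$ and $\Gamma\vdash B:U$, then $\vdash\Gamma,x:B,\Delta\ \mathsf{ctx}$. This goes by induction on $\Delta$ and uses the context formation rules of Section~\ref{sec:typing-op}. Each step calls the main induction hypothesis on the type judgments of $\Delta$, so this lemma is proved together with the main claim.

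Then I will treat the rule cases in turn.
\begin{itemize}
\item \emph{Variable rule.} Lookup of a variable $y\neq x$ in $\Gamma,x:B,\Delta$ still succeeds with the same type. The well-formedness premise is supplied by the context lemma.
\item \emph{Binding rules} ($\Pi$-formation, $\lambda$-introduction). Here the premise lives in $\Gamma,\Delta,y:A$, and the induction hypothesis is applied with $\Delta$ extended to $\Delta,y:A$. This is exactly where the generalization is needed.
\item \emph{Non-binding rules} (application, conversion, the equality rules). These follow directly from the induction hypothesis on their premises. For application we also use that $B[a/y]$ does not mention $x$, because $x$ is fresh.
\item \emph{Trace-specific rules} ($\nil$, $\step$, $\mathsf{trace\mbox{-}elim}$, and the corecursive $\InfTrace$ operations). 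Their premises are ordinary typing judgments in the same context, or in contexts extended by the bound variables $\tau,e,\pi$. So these cases have the same shape as the non-binding or binding cases.
\item \emph{Knowledge-layer rules.} The restriction operator and the formation of $\Kf(\tau)$ are likewise context-uniform rules with no side conditions on the context, so they weaken directly.
\end{itemize}

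I expect the main obstacle to be the mutual dependence between typing and definitional equality through the conversion rule, together with any rule whose premises check the whole context. The first thing I will try is a single simultaneous induction over all judgment forms, with the context lemma folded into the same induction. This avoids needing Lemma~\ref{lem:interp-subst} or any substitution result.

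If the guardedness condition for $\InfTrace$ corecursion is formulated syntactically, it has to be invariant under adding an unused hypothesis. I will check this explicitly: adding $x:B$ introduces no new occurrences of any term, so productivity of the guarded definitions is unchanged.
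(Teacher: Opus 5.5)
Your proposal is correct and takes the same route as the paper: induction on the typing derivation, with the variable case handled via context extension. What you add is the detail the paper leaves to ``standard admissibility lemmas'': the generalization to $\Gamma,\Delta$ so that the binder cases go through, the freshness side condition on $x$, and the simultaneous induction with the equality and context-formation judgments that \textsc{Conv} and \textsc{T-Var} make necessary.
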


\begin{proof}[Proof sketch]
Induction on the derivation of $\Gamma\vdash t:A$. Variable cases are handled by
context extension; all constructors commute with added assumptions by standard
admissibility lemmas.
\end{proof}

\begin{theorem}[Substitution]
\label{thm:substitution}
If $\Gamma,x:A\vdash t:B$ and $\Gamma\vdash s:A$, then
\[
\Gamma\vdash t[s/x]:B[s/x].
\]
\end{theorem}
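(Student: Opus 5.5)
We prove Theorem~\ref{thm:substitution} by induction on the derivation of $\Gamma,x:A\vdash t:B$. The statement must first be generalized, since the context may continue past $x$. We will prove: if $\Gamma,x:A,\Delta\vdash t:B$ and $\Gamma\vdash s:A$, then
\[
\Gamma,\Delta[s/x]\vdash t[s/x]:B[s/x].
\]
The theorem is the case $\Delta=\cdot$. The same induction must also cover the other judgment forms, because the rules are mutually dependent. These are $\vdash\Gamma,x:A,\Delta\ \mathsf{ctx}$, $\Gamma,x:A,\Delta\vdash C:U$, and the two definitional equality judgments $\equiv$. The conversion rule and the $\Pi$-formation premises refer to them, so they cannot be handled separately.

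\textbf{Order of cases.}
\begin{itemize}
  \item \emph{Variable rule.} If the variable is $x$ itself, then $x[s/x]=s$. We conclude by weakening $\Gamma\vdash s:A$ along $\Delta[s/x]$, using Theorem~\ref{thm:weakening} iterated. We also use that $x$ does not occur in $A$, so $A[s/x]=A$. If the variable lies in $\Gamma$, nothing changes. If it lies in $\Delta$, its type becomes the substituted type.
  \item \emph{Context extension and $\Pi$-formation.} These follow from the induction hypothesis, with $\Delta$ extended by the bound variable.
  \item \emph{$\lambda$ rule.} This likewise uses the induction hypothesis with $\Delta$ extended by the bound variable.
  \item \emph{Application rule.} This needs the syntactic substitution lemma
  \[
  B[a/y][s/x]=B[s/x]\bigl[a[s/x]/y\bigr],
  \]
  for $y\neq x$ and $y$ not free in $s$. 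It is proved by a routine induction on raw terms, choosing bound variables fresh.
  \item \emph{Conversion rule.} We apply the induction hypothesis to both premises, the typing premise and the type equality premise, and then reapply conversion.
  \item \emph{Trace rules for $\nil$ and $\step$.} The premises $\Gamma\vdash\sigma:\State$, $\Gamma\vdash e:\Event$ and $\Gamma\vdash\pi:\Step(\sigma_1,e,\sigma_2)$ are each substituted by the induction hypothesis. The conclusion index $\FinTrace(\sigma_0,\sigma_2)[s/x]$ then matches. This holds because $\FinTrace$, $\State$, $\Event$ and $\Step$ are closed constants, so substitution acts only on their arguments.
  \item \emph{$\mathsf{trace\mbox{-}elim}$, coinductive $\InfTrace$ operations, knowledge families $\Kf$ and $\restrict$.} These cases are structurally identical, since substitution commutes with each constructor.
  \item \emph{Definitional equality rules.} Congruence cases are direct. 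For the $\beta$-rule, the substitution lemma above again aligns the two sides.
\end{itemize}

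\textbf{Main obstacle.} The hard part is not any single case but the setup. The generalized statement must be strong enough for all judgment forms at once. For the variable case it relies on weakening, so that weakening can be used freely inside the induction. We take Theorem~\ref{thm:weakening} as already established, and in its multi-variable form obtained by iteration. Its own proof must then not depend on substitution, which holds for the standard presentation. Should the formulation of the rules require it, the fallback is to prove weakening and substitution simultaneously by induction on derivation height.
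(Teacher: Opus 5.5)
Your proposal is correct and follows the same route as the paper, which also argues by induction on the typing derivation and handles the $\Pi$ and trace-constructor cases through substitution closure of the base calculus and stability of trace indices. Your version spells out what the paper's sketch leaves implicit: the generalization to a trailing context $\Delta$ (without which the $\lambda$ and $\Pi$ cases do not go through), the mutual induction over all judgment forms, weakening in the variable case, and the syntactic substitution lemma for application and $\beta$.
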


\begin{proof}[Proof sketch]
By induction on typing derivations of $t$. The critical cases are dependent
products and trace-indexed constructors; both follow from substitution closure of
the underlying dependent calculus and stability of trace indices under term
replacement.
\end{proof}

\begin{theorem}[Subject Reduction (Explicit Form)]
\label{thm:subject-reduction}
If $\Gamma\vdash t:A$ and $t\to t'$, then
\[
\Gamma\vdash t':A.
\]
\end{theorem}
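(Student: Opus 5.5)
The plan is to prove Theorem~\ref{thm:subject-reduction} by induction on the derivation of $t\to t'$, where $\to$ is the one-step compatible closure of the computational rules: $\beta$ for $\Pi$, the computation rules of $\mathsf{trace\mbox{-}elim}$ on $\nil$ and $\step$, and the guarded unfolding of $\head$/$\tail$ on corecursively defined infinite traces. Inside each case I will invert the typing derivation $\Gamma\vdash t:A$. Because the conversion rule of Section~\ref{sec:syntax} can be applied at any point, I will first prove a \emph{generation (inversion) lemma}. For example, if $\Gamma\vdash \lambda x.b:C$ then there are $A',B'$ with $\Gamma,x:A'\vdash b:B'$ and $\Gamma\vdash C\equiv \Pi(x:A').B':U$. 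Analogous statements hold for applications, $\nil$, $\step$ and the eliminator. The proof of the generation lemma is by induction on typing derivations, collecting the conversions.

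The congruence cases, where reduction happens under a constructor, follow directly from the induction hypothesis. The one point of care is that when a subterm occurring in a type index reduces (for instance $a\to a'$ in $f\,a : B[a/x]$), the new type $B[a'/x]$ must be related to the old one. For this I will use the fact that one-step reduction is contained in definitional equality ($t\to t'$ implies $\Gamma\vdash t\equiv t':A$), together with congruence of $\equiv$ under substitution. One conversion step then restores the type $A$.

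The main case is $\beta$: $(\lambda x.b)\,a\to b[a/x]$. Generation gives $\Gamma,x:A'\vdash b:B'$, $\Gamma\vdash a:A$ and $\Pi(x:A').B'\equiv\Pi(x:A).B$. Theorem~\ref{thm:substitution} then yields $\Gamma\vdash b[a/x]:B'[a/x]$, and we convert to $B[a/x]$. The trace-eliminator cases are handled the same way. For $\mathsf{trace\mbox{-}elim}\,P\,z\,s\,(\step(\tau,e,\pi))\to s\,\tau\,e\,\pi\,(\mathsf{trace\mbox{-}elim}\,P\,z\,s\,\tau)$, I re-type the right-hand side from the premises supplied by the typing rule for $\step$. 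The indices $\sigma_0,\sigma_2$ of $\FinTrace$ match those demanded by $P(\step(\tau,e,\pi))$ by inversion. The $\nil$ case is immediate, and the coinductive unfolding case uses the guardedness condition together with the typing of $\head$ and $\tail$. Knowledge-layer restriction terms have no reduction rules beyond those inherited through the computational layer, so no additional reduction cases arise for them. Restriction identities are definitional equalities, as stated in Lemma on Restriction Functoriality, rather than reductions.

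I expect the main obstacle to be \emph{injectivity of $\Pi$ under conversion}: from $\Pi(x:A').B'\equiv\Pi(x:A).B$ we need $A'\equiv A$ and $B'\equiv B$. It also arises in the analogous injectivity step for $\FinTrace(\sigma_0,\sigma_n)$ indices in the trace cases. With an untyped or declarative $\equiv$ this is not automatic. My first attempt will be to establish confluence of $\to$ by the Tait--Martin-L\"of parallel-reduction method, extended with the trace rules. These rules are left-linear and non-overlapping, so the extension should go through. Confluence gives a common reduct, and since reduction preserves head constructors, injectivity follows. If that fails because $\equiv$ is typed, the fallback is to derive injectivity semantically from the model of Theorem~\ref{thm:soundness}, or to inherit it, as the paper suggests, from the underlying dependent calculus via conservativity of the trace constructors.
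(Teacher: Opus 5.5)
Your skeleton is the paper's: induction on the reduction step, the $\beta$ case from Theorem~\ref{thm:substitution}, and the $\mathsf{trace\mbox{-}elim}$ computation rules handled by re-typing the $\step$ branch. You are right that this silently needs a generation lemma and injectivity of $\Pi$ and of the $\FinTrace$ indices under conversion, which the paper's sketch never mentions. However, neither of your routes to that injectivity goes through as stated.

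The confluence route fails first. Confluence of $\to$ only yields injectivity when $\equiv$ is exactly the equivalence closure of $\to$. You deliberately keep the functor laws $\restrict(id_\tau,k)\equiv k$ and $\restrict(\epsilon_1\circ\epsilon_2,k)\equiv\restrict(\epsilon_1,\restrict(\epsilon_2,k))$ outside $\to$, and the appendix lists them as definitional equalities. Once a type mentions a knowledge witness, e.g.\ $\Pi(x:A).P(\restrict(id_\tau,k))$ versus $\Pi(x:A).P(k)$, two convertible $\Pi$-types need not have a common reduct, so ``confluence gives a common reduct'' is false. There are two ways out.
\begin{itemize}
\item Orient these laws as rewrite rules. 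This \emph{does} add subject-reduction cases; for instance, $\restrict(id_\tau,k)\to k$ needs injectivity of the indices of $\Extf$. If composition of extensions also computes, you get critical pairs, so the ``left-linear and non-overlapping'' shortcut is unavailable.
\item Prove Church--Rosser modulo these equations.
\end{itemize}

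Two further points hold in the paper's typed-equality setting $\Gamma\vdash A\equiv B:U$. First, you also need typed conversion to agree with untyped convertibility on well-typed terms, which is a nontrivial theorem; the clean fix is a PTS-style conversion rule based on untyped convertibility. Second, your ``$t\to t'$ implies $\Gamma\vdash t\equiv t':A$'' is not an available prior fact there. It must be proved simultaneously with subject reduction, by strengthening the statement to $\Gamma\vdash t\equiv t':A$.

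The semantic fallback does not work at all. Theorem~\ref{thm:soundness} transports equalities from syntax to the model, never back. Moreover, in a set-valued model $\Pi(x:A).\bot$ and $\Pi(x:A').\bot$ are both empty for any two inhabited $A,A'$, so not even semantic injectivity holds. Turning $\Pi(x:A').B'\equiv\Pi(x:A).B$ into $A'\equiv A$ requires an equality-reflecting construction, such as a Kripke logical relation or normalization by evaluation, not the soundness model. A smaller omission: reduction inside a $\Pi$-domain also needs a context-conversion lemma, not just the induction hypothesis.
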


\begin{proof}[Proof sketch]
Induction on reduction steps. Beta-reduction follows from
Theorem~\ref{thm:substitution}. Computational rules for trace constructors are
typed by preservation of the corresponding eliminator branches.
\end{proof}

\section{Representative Application Modeling}
\label{sec:applications}

This section provides compact but formalized templates that can be expanded into
full case studies. The key objective is to show how trace-indexed typing
distinguishes logical monotonicity from knowledge invalidation.

\subsection{Runtime monitoring and violation localization}

Let
\[
\mathsf{Safe}:\FinTrace\to\TypeL
\]
encode ``no policy-violating access has occurred on this prefix.'' Assume
\[
\Gamma\vdash p:\mathsf{Safe}(\tau).
\]
If monitoring observes
\[
\tau'=\tau\cdot e_{\mathsf{viol}},
\]
the system may fail to derive
\[
\Gamma\vdash p:\mathsf{Safe}(\tau').
\]
The diagnostic content is constructive: the edge
$\tau\to\tau'$ localizes the exact event where validity is lost.

\paragraph{Typed derivation snippet.}
\[
\frac{
  \Gamma\vdash \tau:\FinTrace(\sigma_0,\sigma_1)
  \qquad
  \Gamma\vdash e_{\mathsf{viol}}:\Event
  \qquad
  \Gamma\vdash \pi:\Step(\sigma_1,e_{\mathsf{viol}},\sigma_2)
}{
  \Gamma\vdash \step(\tau,e_{\mathsf{viol}},\pi):\FinTrace(\sigma_0,\sigma_2)
}
\]
The monitor reports failure precisely at this extension step, yielding a typed
localization witness.

\subsection{Credential revocation in security protocols}

Reuse the family
\[
\mathsf{Auth}(c,\tau):\TypeL.
\]
On issuance-only prefixes we derive witnesses
\[
\Gamma\vdash q:\mathsf{Auth}(c,\tau).
\]
After extension by revocation,
\[
\tau'=\tau\cdot\mathsf{Revoke}(c),
\]
one generally has no derivation of
$\Gamma\vdash q:\mathsf{Auth}(c,\tau')$. The term $q$ remains well-formed, but no
longer at the updated trace index. This is the intended form of semantic
non-monotonicity for dynamic authorization.

\paragraph{Typed derivation snippet.}
\[
\Gamma\vdash q:\mathsf{Auth}(c,\tau),\qquad
\Gamma\vdash \rho:\Extf(\tau,\tau\cdot\mathsf{Revoke}(c)).
\]
The system still admits the backward restriction shape
\[
\restrict(\rho,-):\mathsf{Auth}(c,\tau\cdot\mathsf{Revoke}(c))
\to
\mathsf{Auth}(c,\tau),
\]
while forward transport of $q$ to the revoked index generally fails.

\subsection{Defeasible defaults via trace-conditioned validity}

Let
\[
\mathsf{CanAccess}(u,r,\tau):\TypeL
\]
represent access validity of user $u$ to resource $r$ on trace $\tau$. A default
witness
\[
d:\mathsf{CanAccess}(u,r,\tau)
\]
may become inapplicable after a risk event
$e_{\mathsf{risk}}$:
\[
\tau'=\tau\cdot e_{\mathsf{risk}}.
\]
Again, the framework does not derive a contradiction. Instead, it changes the
typing index and thereby models exception handling as index-sensitive validity.

\paragraph{Typed derivation snippet.}
\[
\Gamma\vdash d:\mathsf{CanAccess}(u,r,\tau)
\]
and
\[
\Gamma\vdash \epsilon:\Extf(\tau,\tau\cdot e_{\mathsf{risk}}).
\]
Whether
$d:\mathsf{CanAccess}(u,r,\tau\cdot e_{\mathsf{risk}})$ remains derivable depends
on updated policy side conditions, not on failure of structural rules.

\begin{proposition}[Application-level index shift principle]
\label{prop:application-index-shift}
For any knowledge family $K:\FinTrace\to\TypeL$, if policy predicates depend on
new events introduced by $\tau\to\tau'$, then there may exist
$a:K(\tau)$ such that $a$ is not typable at $K(\tau')$.
\end{proposition}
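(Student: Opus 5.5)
The statement is existential and modal ("there may exist"), so I will prove it by exhibiting one witnessing situation rather than by a general argument. I read it as follows: for a suitable knowledge family $K:\FinTrace\to\TypeL$ and an extension $\tau\to\tau'$ whose new event is tested by the policy predicate, there is $a:K(\tau)$ with no derivation of $a:K(\tau')$. I will use the running credential example $K=\mathsf{Auth}(c,-)$ with $\tau'=\tau\cdot\mathsf{Revoke}(c)$, and then explain why the same mechanism works for any event-sensitive policy.

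\textbf{Step 1: define the family.} I define $\mathsf{Auth}(c,-)$ by structural recursion on finite traces, using $\mathsf{trace\mbox{-}elim}$ with $P(\tau)=\TypeL$. The clauses are:
\begin{itemize}
\item on $\nil$, the type $\bot$ (or $\top$, depending on the initial policy);
\item on $\step(\tau,\mathsf{Issue}(c),\pi)$ and $\step(\tau,\mathsf{Use}(c),\pi)$, a unit-like type, or $\mathsf{Auth}(c,\tau)$;
\item on $\step(\tau,\mathsf{Revoke}(c),\pi)$, the empty type $\bot$.
\end{itemize}
By the computation rule of the eliminator, $\mathsf{Auth}(c,\tau\cdot\mathsf{Revoke}(c))\equiv\bot$ definitionally. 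For an issuance prefix $\tau$ we have $\mathsf{Auth}(c,\tau)\equiv\top$, so a canonical $q:\mathsf{Auth}(c,\tau)$ exists.

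\textbf{Step 2: show $q$ is not typable at the new index.} Suppose $\vdash q:\mathsf{Auth}(c,\tau')$ were derivable in the empty context. Then conversion would give a closed term of type $\bot$, contradicting Theorem~\ref{thm:consistency}.

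Here the context matters. In an open context $\Gamma$ containing a hypothesis of type $\bot$, typing could succeed, which is why the statement only says "may". I will therefore fix $\Gamma$ to be closed, or at least consistent in the sense that its interpretation in the model of Theorem~\ref{thm:soundness} is inhabited. In that case Theorem~\ref{thm:soundness} transfers the emptiness of $\llbracket\bot\rrbracket$ to show that $\Gamma\vdash q:\mathsf{Auth}(c,\tau')$ is not derivable.

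\textbf{Step 3: connect to the presheaf characterization.} The map $\restrict(\rho,-):\mathsf{Auth}(c,\tau')\to\mathsf{Auth}(c,\tau)$ has empty domain and inhabited codomain, so it is not surjective. By Theorem~\ref{thm:nonmono-char}, the example is non-monotone, which matches the intended reading of "index shift".

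\textbf{Main obstacle.} The hard part is the phrase "policy predicates depend on new events", which the paper does not formalize. I will make it precise as: the family's value at $\step(\tau,e,\pi)$ is computed by the eliminator branch at $e$, and that branch can yield an empty type while the value at $\tau$ is inhabited. Under this reading, the general statement reduces to the construction in Steps 1–2 with $\mathsf{Revoke}(c)$ replaced by $e$ and $\top/\bot$ replaced by an inhabited type and an empty type. Since the claim is only existential, this single instance discharges it. The remaining technical check is that the recursion in Step 1 is a legitimate large elimination into $\TypeL$. I will assume this is permitted by the layer stratification described in the section on syntax and judgments.
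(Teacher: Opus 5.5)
Your Steps 1--2 are correct and follow the paper's route. The paper's proof just instantiates $K$ with one of the application families and picks an event that changes the policy side condition. You make this precise by defining $\mathsf{Auth}(c,-)$ through large elimination and deriving non-typability from Theorem~\ref{thm:consistency} via \textsc{Conv}, with a sensible caveat about inconsistent contexts. Step 1 also tacitly needs $\Event$ to have a case eliminator that recognizes $\mathsf{Revoke}(c)$, which the paper leaves opaque. Since the claim is existential, you may simply choose such a transition system.

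Step 3, however, is unnecessary, and for some of the clause choices you allow it is wrong. Theorem~\ref{thm:nonmono-char} applies only to a presheaf on $\Tf$, so exhibiting the single empty map $\restrict(\rho,-)$ is not enough. You need restriction maps along every extension, and they must be functorial. Contravariance then forces emptiness to persist: an empty fiber at a prefix makes the fiber at every extension empty. Every trace from $\sigma_0$ extends $\nil(\sigma_0)$, so the choice $\nil\mapsto\bot$ together with inhabited issuance prefixes would require a map $\top\to\bot$. The clause $\mathsf{Issue}(c)\mapsto\top$ fails the same way when $\tau_0\cdot\mathsf{Revoke}(c)\cdot\mathsf{Issue}(c)$ is restricted to $\tau_0\cdot\mathsf{Revoke}(c)$.

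If you want to keep Step 3, take $\nil\mapsto\top$ and let every non-revoking branch return the recursive value $\mathsf{Auth}(c,\tau)$, so that revocation is absorbing. Then check the restriction maps: identity on $\top$, ex falso out of $\bot$. Otherwise drop Step 3, since the proposition concerns a bare family $K:\FinTrace\to\TypeL$ and Steps 1--2 already discharge it.
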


\begin{proof}[Proof sketch]
Instantiate $K$ by one of the families above and choose an extension event that
changes the policy side condition. Typability fails at the new index for semantic
reasons, not because the structural rules cease to hold.
\end{proof}

\section{Related Work}
\label{sec:related}

DEKL builds on Martin-L\"of dependent type theory~\cite{martinlof84} and categorical semantics through categories with families~\cite{dybjer96,streicher91,jacobs99,pitts00,awodey09}. It is also related to temporal/modal verification traditions (LTL, CTL, $\mu$-calculus)~\cite{pnueli77,emerson81,clarke86,kozen83,bradfield07} and trace/event semantics in concurrency~\cite{winskel93}.

Compared with temporal logics, DEKL does not primarily target model-checking
expressiveness classes; instead, execution evidence is internalized as typed terms
and used as proof-relevant reachability witnesses. Compared with type-theoretic
effect systems, the focus is not effect tracking in programs, but trace-indexed
validity of knowledge witnesses under system evolution. The distinctive
contribution is not to replace existing logics, but to connect traces, proofs,
and knowledge revision in one dependent framework where non-monotonic behavior is
semantic (presheaf-theoretic) rather than inferential.

\paragraph{Four-dimensional comparison.}
The distinction from nearby lines of work can be summarized along four dimensions:
\begin{itemize}
  \item \textbf{Primary object.} Temporal logics reason over path formulas;
  DEKL reasons over typed trace witnesses.
  \item \textbf{Methodological core.} Program logics and effect systems center on
  operational effects; DEKL centers on dependent typing indexed by execution
  traces.
  \item \textbf{Proof obligation style.} Model-checking emphasizes satisfaction
  over transition systems, while DEKL emphasizes construction and transport of
  typed witnesses under trace evolution.
  \item \textbf{Semantic carrier.} Classical modal semantics use Kripke/transition
  frames; DEKL uses free trace categories with presheaf-indexed knowledge fibers.
\end{itemize}
This comparison clarifies that DEKL is complementary to temporal verification: it
provides a proof-relevant, type-theoretic account of history-sensitive knowledge
rather than a replacement for automata- or fixpoint-based model checking.

\paragraph{Finer technical contrasts.}
\begin{itemize}
  \item \textbf{Counterexample form.} Model checkers return violating paths;
  DEKL can represent violating extensions as typed index-shift boundaries.
  \item \textbf{Update granularity.} Dynamic epistemic updates are often global
  model transformations; DEKL performs local re-indexing via $\restrict$ maps.
  \item \textbf{Proof relevance.} Classical entailment is proposition-level;
  DEKL keeps witness-level artifacts that can be transported, restricted, or fail
  to transport across trace extensions.
  \item \textbf{Compositionality locus.} Temporal frameworks compose formulas;
  DEKL composes both terms and traces in one typed judgmental layer.
\end{itemize}

\section{Limitations and Scope}
\label{sec:limitations}

The present manuscript prioritizes coherence of the core framework over maximal
coverage. The main limitations are:
\begin{itemize}
  \item \textbf{Finite-trace emphasis.} Most formal adequacy and characterization
  results are stated for finite traces; infinite-trace extensions are outlined but
  not fully developed.
  \item \textbf{Proof granularity.} Several results remain at proof-sketch level
  and should be expanded into full derivation-level proofs for final publication.
  \item \textbf{Mechanization status.} Rule schemas and proof roadmaps are now
  provided, but a completed Coq/Agda artifact is not yet included.
  \item \textbf{Case-study depth.} Application sections currently provide formal
  templates rather than full industrial-scale evaluations.
\end{itemize}
These limitations define a concrete extension path rather than a conceptual gap in
the framework.

\section{Conclusion}
\label{sec:conclusion}

DEKL 2.0 separates monotone proof theory from non-monotone knowledge dynamics. The core calculus remains dependent type theory; traces are first-class terms; knowledge is trace-indexed via presheaf semantics; and non-monotonicity is characterized by non-surjective restriction maps.

This reorganization clarifies both formal structure and conceptual message:
\emph{proof objects can carry execution history}, and knowledge evolution in
dynamic systems can be modeled constructively without adopting non-monotonic
proof rules. In particular, Theorem~\ref{thm:nonmono-char} and
Theorem~\ref{thm:conceptual-nonmono} isolate the source of non-monotonicity,
while Theorem~\ref{thm:soundness} and Theorem~\ref{thm:consistency} preserve
the expected metatheoretic guarantees.

Future work includes mechanized formalization in proof assistants, a refined
account of fixed-point interaction with trace-indexed knowledge at the
proposition layer, and larger case studies in runtime monitoring and security
policy evolution.

\paragraph{Short abstract (submission form draft).}
DEKL 2.0 is a dependent type-theoretic framework for trace-indexed knowledge
evolution. The proof calculus remains monotone, while non-monotonic behavior is
explained semantically through presheaf restriction along trace extensions.
Finite and infinite traces are first-class typed objects; finite traces
correspond constructively to reachability witnesses; and the full semantics
combines CwF interpretation with the free trace category and its presheaf
category. Non-monotonicity is characterized by non-surjective restriction maps.
The framework unifies execution evidence, proof objects, and knowledge revision
without introducing non-monotonic proof rules.

\appendix

\section{Rule Schemas for a Full Version}
\label{app:rules}

This appendix lists compact rule schemas used by the main text. It serves as a
technical bridge toward a full LMCS-length formalization.

\subsection{Contexts, variables, and conversion}

\paragraph{Named rules.}
\[
\textsc{Ctx-Empty}\;
\frac{}{\,\vdash \cdot\ \mathsf{ctx}\,}
\qquad
\textsc{Ctx-Ext}\;
\frac{\vdash\Gamma\ \mathsf{ctx}\quad \Gamma\vdash A:U}
     {\,\vdash\Gamma,x:A\ \mathsf{ctx}\,}
\]
\[
\textsc{T-Var}\;
\frac{\vdash\Gamma,x:A,\Delta\ \mathsf{ctx}}{\Gamma,x:A,\Delta\vdash x:A}
\qquad
\textsc{Conv}\;
\frac{\Gamma\vdash t:A\quad \Gamma\vdash A\equiv B:U}{\Gamma\vdash t:B}
\]

\subsection{Universe and dependent product fragment}

\paragraph{Named rules.}
\[
\textsc{U-Form}\;
\Gamma\vdash \Uc_i:\Uc_{i+1}
\qquad
\textsc{TypeL-Form}\;
\Gamma\vdash \TypeLi{i}:\TypeLi{i+1}
\]
\[
\textsc{Pi-Form}\;
\frac{\Gamma\vdash A:U\quad \Gamma,x:A\vdash B:U}
     {\Gamma\vdash \Pi(x:A).B:U}
\]
\[
\textsc{Pi-Intro}\;
\frac{\Gamma,x:A\vdash t:B}{\Gamma\vdash \lambda x.t:\Pi(x:A).B}
\qquad
\textsc{Pi-Elim}\;
\frac{\Gamma\vdash f:\Pi(x:A).B\quad \Gamma\vdash a:A}
     {\Gamma\vdash f\,a:B[a/x]}
\]
\[
\textsc{Pi-Beta}\;
\frac{\Gamma\vdash a:A\quad \Gamma,x:A\vdash b:B}
     {\Gamma\vdash (\lambda x.b)\,a \equiv b[a/x] : B[a/x]}
\]

\subsection{Finite-trace constructors and eliminator}

Assume fixed $\State,\Event,\Step$ as in Section~\ref{sec:typing-op}.
\[
\textsc{T-Nil}\;
\frac{\Gamma\vdash \sigma:\State}
     {\Gamma\vdash \nil(\sigma):\FinTrace(\sigma,\sigma)}
\]
\[
\textsc{T-Step}\;
\frac{\Gamma\vdash \tau:\FinTrace(\sigma_0,\sigma_1)\quad
      \Gamma\vdash e:\Event\quad
      \Gamma\vdash \pi:\Step(\sigma_1,e,\sigma_2)}
     {\Gamma\vdash \step(\tau,e,\pi):\FinTrace(\sigma_0,\sigma_2)}
\]
\[
\textsc{T-TraceElim}\;
\begin{aligned}
\mathsf{trace\mbox{-}elim}:{}&
\Pi(P)\to P(\nil)\\
&\to\left(\Pi(\tau,e,\pi).\,P(\tau)\to P(\step(\tau,e,\pi))\right)\\
&\to\Pi(\tau).\,P(\tau).
\end{aligned}
\]

Computation equations are:
\[
\mathsf{trace\mbox{-}elim}(\cdots,\nil) \equiv \mathsf{base},
\]
\[
\mathsf{trace\mbox{-}elim}(\cdots,\step(\tau,e,\pi))
\equiv
\mathsf{stepCase}(\tau,e,\pi,\mathsf{trace\mbox{-}elim}(\cdots,\tau)).
\]

\subsection{Knowledge-index action along extensions}

Given $\Kf:\Tf^{op}\to\mathbf{Type}$ and $\epsilon:\tau\to\tau'$:
\[
\restrict(\epsilon,-):\Kf(\tau')\to\Kf(\tau),
\]
with equations
\[
\restrict(id_\tau,k)\equiv k,\qquad
\restrict(\epsilon_1\circ\epsilon_2,k)\equiv
\restrict(\epsilon_1,\restrict(\epsilon_2,k)).
\]

\subsection{Mini derivation template}

The following pattern is used repeatedly in mechanized derivations:
\[
\frac{
  \Gamma\vdash \tau:\FinTrace(\sigma_0,\sigma_1)
  \qquad
  \Gamma\vdash e:\Event
  \qquad
  \Gamma\vdash \pi:\Step(\sigma_1,e,\sigma_2)
}{
  \Gamma\vdash \step(\tau,e,\pi):\FinTrace(\sigma_0,\sigma_2)
}
\;(\textsc{T-Step}).
\]
Combined with \textsc{Conv}, this yields typed trace extensions even when target
states are presented through definitional equalities.

\section{Semantic Lemma Chain (Expanded Skeleton)}
\label{app:semantic-lemmas}

This appendix records a dependency graph of lemmas that can be expanded into full
proofs for a journal-length version.

\begin{lemma}[Interpretation respects substitution]
\label{lem:interp-subst}
For any derivable $\Gamma,x:A\vdash t:B$ and $\Gamma\vdash s:A$,
\[
\llbracket t[s/x]\rrbracket=\llbracket t\rrbracket[\llbracket s\rrbracket/x].
\]
\end{lemma}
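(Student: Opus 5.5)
The plan is to prove the substitution lemma for the interpretation $\llbracket-\rrbracket$ by induction on the derivation of $\Gamma,x:A\vdash t:B$. In CwF form, $\llbracket s\rrbracket$ gives a section $\bar s=\langle \mathrm{id},\llbracket s\rrbracket\rangle:\llbracket\Gamma\rrbracket\to\llbracket\Gamma,x:A\rrbracket$ of the comprehension projection. The informal notation $\llbracket t\rrbracket[\llbracket s\rrbracket/x]$ is then read as reindexing of the term $\llbracket t\rrbracket$ along $\bar s$. The goal therefore becomes
\[
\llbracket t[s/x]\rrbracket \;=\; \llbracket t\rrbracket\{\bar s\}.
\]
Before the induction we need this equation for types as well, $\llbracket B[s/x]\rrbracket=\llbracket B\rrbracket\{\bar s\}$, so that both sides live in the same fibre. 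We therefore prove the type and term statements simultaneously by mutual induction.

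The hypothesis must also be generalised in the standard way. We assume a derivation of $\Gamma,x:A,\Delta\vdash t:B$ and prove
\[
\llbracket t[s/x]\rrbracket=\llbracket t\rrbracket\{\bar s^{+\Delta}\},
\]
where $\bar s^{+\Delta}$ is $\bar s$ lifted (via the q-morphisms of the comprehension) through the telescope $\Delta$. The generalisation is needed because the $\Pi$-intro case and the trace eliminator extend the context. The lemma as stated is the case $\Delta=\cdot$.

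The cases are then checked in order:
\begin{itemize}
\item \textbf{Variables.} For $x$ itself, $\bar s$ followed by the generic term yields $\llbracket s\rrbracket$ by the CwF equation for the variable. For other variables, weakening and projection laws apply; this also needs the usual lemma that interpretation commutes with weakening, which I prove first by the same method.
\item \textbf{$\Pi$-formation, $\lambda$, and application.} These follow from the hypothesis that the semantic $\Pi$-structure is stable under reindexing (Beck--Chevalley for $\Pi$), using the lifted section for the bound variable.
\item \textbf{Conversion.} This holds because the interpretation identifies definitionally equal types, which is part of Theorem~\ref{thm:soundness}'s setup.
\item \textbf{$\nil$ and $\step$.} These are interpreted, as in Lemma~\ref{lem:adequacy-sound}, as identities and composition with generating transitions in the free trace category (Theorem~\ref{thm:free-trace-category}). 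Substitution acts only on the state, event and witness subterms, so it suffices that identities and composition are preserved under reindexing of the constant family $\FinTrace$.
\item \textbf{$\mathsf{trace\mbox{-}elim}$.} This follows from stability of the semantic eliminator under reindexing.
\item \textbf{$\restrict$.} This follows from naturality of the presheaf action of $\Kf$: reindexing commutes with $\restrict(\epsilon,-)$ pointwise.
\end{itemize}

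I expect the main obstacle to be a circularity. Theorem~\ref{thm:soundness} invokes this lemma, so the well-definedness of $\llbracket-\rrbracket$ on derivations, and its independence from the choice of derivation, cannot simply be borrowed from soundness. I would handle this as in Streicher's approach. First interpret raw terms as partial functions. Then prove this substitution lemma for the partial interpretation, by induction on raw syntax with definedness side conditions. Only afterwards prove soundness, that is, definedness on derivable judgments. The trace eliminator is the most delicate case here, because the motive $P$ depends on a trace variable and the lifted section must pass through $\Pi(\tau,e,\pi)$. I would try first to require the model's eliminator to be given as a strictly stable (split) choice, so that the equation holds on the nose rather than up to isomorphism.
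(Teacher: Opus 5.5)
Your proof is correct and uses the paper's case analysis: variables by the CwF projection laws, $\Pi$ and application by stability of comprehension and $\Pi$ under reindexing, and $\nil$/$\step$ by preservation of identities and composition (the paper's ``naturality square'' for $\step$). What differs is how the induction is organised.

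\textbf{Generalisation.} You generalise to $\Gamma,x:A,\Delta$ with the lifted section $\bar s^{+\Delta}$. The paper's one-line ``induction on derivations'' needs this but never states it: the premise of the $\lambda$ case lives in $\Gamma,x:A,y:C$, where $x$ is no longer last. You also treat $\mathsf{trace\mbox{-}elim}$ and $\restrict$, which the sketch omits.

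\textbf{Induction principle.} More substantively, you induct on raw syntax for a partial interpretation, in Streicher's style, rather than on derivations. The paper's route is shorter. However, it tacitly assumes that $\llbracket t[s/x]\rrbracket$ does not depend on which derivation of $\Gamma\vdash t[s/x]:B[s/x]$ is interpreted. That coherence normally comes together with soundness, and Step~5 of Theorem~\ref{thm:soundness} cites this very lemma. Your route makes the dependency order of Appendix~\ref{app:proof-roadmap} genuinely well-founded. The price is carrying definedness side conditions, and recovering the lemma as stated (for derivable judgments) only once definedness has been established.

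\textbf{Conversion case.} Your \textbf{Conversion} bullet appeals to the setup of Theorem~\ref{thm:soundness}, which reintroduces exactly the circularity you diagnose. Under your raw-syntax plan that case does not arise at all, because the interpretation of $t$ does not depend on the type at which it was checked. So you should drop that bullet.

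\textbf{Strict stability.} Your worry about strict stability of the eliminator is discharged automatically in the standard presheaf CwF, where types over $\Gamma$ are presheaves on the category of elements of $\Gamma$. There, reindexing is strictly functorial, and every former, including the pointwise-recursive eliminator, is stable on the nose.
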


\begin{proof}[Proof sketch]
Induction on derivations; variable and application cases are immediate from CwF
substitution laws, while trace constructors reduce to functoriality in $\Tf$.
\textbf{Key subcases.}
\begin{itemize}
  \item \textbf{Variable case:} substitution in context projections.
  \item \textbf{Application case:} compatibility of interpretation with pullback
  substitution in CwF comprehension.
  \item \textbf{Trace case:} naturality square for morphisms induced by
  $\step$-composition.
\end{itemize}
\end{proof}

\begin{lemma}[Functorial coherence of restrictions]
\label{lem:restrict-coherence}
The semantic interpretation of syntactic restriction composition agrees with
composition in $\Tf^{op}$.
\end{lemma}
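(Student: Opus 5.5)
The plan is to show that the two sides of the statement are the same function: the interpretation of a composite of syntactic restrictions, and the semantic restriction along the composite morphism in $\Tf^{op}$. Concretely, for composable extensions $\epsilon_2:\tau\to\tau'$ and $\epsilon_1:\tau'\to\tau''$ (so that $\epsilon_1\circ\epsilon_2:\tau\to\tau''$ in $\Tf$) and $k:\Kf(\tau'')$, I would prove
\[
\llbracket \restrict(\epsilon_1\circ\epsilon_2,k)\rrbracket
=\Kf(\epsilon_2)\bigl(\Kf(\epsilon_1)(\llbracket k\rrbracket)\bigr),
\]
where $\Kf(\epsilon)$ denotes the functor action in $\widehat{\Tf}=[\Tf^{op},\mathbf{Set}]$. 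I would also handle identities, $\llbracket\restrict(id_\tau,k)\rrbracket=\llbracket k\rrbracket$, in the same way.

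The argument proceeds in three steps.

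\textbf{Step 1 (fix the clauses).} Interpret each extension term $\epsilon$ as a morphism $\llbracket\epsilon\rrbracket$ of $\Tf$. Following Lemma~\ref{lem:adequacy-sound}, extensions are built from $\nil$ and $\step$, so this is composition of generators in the free category of Theorem~\ref{thm:free-trace-category}. Syntactic composition of extensions is concatenation, and I would check that $\llbracket\epsilon_1\circ\epsilon_2\rrbracket=\llbracket\epsilon_1\rrbracket\circ\llbracket\epsilon_2\rrbracket$ by induction on $\epsilon_1$. The base case uses the identity laws. The $\step$ case uses associativity of concatenation, that is, the category equations of the free trace category.

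\textbf{Step 2 (restriction clause).} Set $\llbracket\restrict(\epsilon,k)\rrbracket:=\Kf(\llbracket\epsilon\rrbracket)(\llbracket k\rrbracket)$, as in Step~4 of the proof of Theorem~\ref{thm:soundness}.

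\textbf{Step 3 (compute).} Combining the two clauses with contravariance gives
\[
\llbracket \restrict(\epsilon_1\circ\epsilon_2,k)\rrbracket
=\Kf(\llbracket\epsilon_1\rrbracket\circ\llbracket\epsilon_2\rrbracket)(\llbracket k\rrbracket)
=\Kf(\llbracket\epsilon_2\rrbracket)\bigl(\Kf(\llbracket\epsilon_1\rrbracket)(\llbracket k\rrbracket)\bigr).
\]
The right-hand side is the interpretation of the iterated restriction. The identity case follows from $\Kf(id)=id$. Together these are the semantic content of the Restriction Functoriality lemma.

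The remaining obligation is to show that the interpretation is well defined on the definitional equalities listed in Appendix~\ref{app:rules}. Terms identified by $\equiv$, including the restriction equations there, must receive equal interpretations. The functor laws of $\Kf$ take care of this, once the interpretation is known to be invariant under conversion. That invariance comes from the induction in Theorem~\ref{thm:soundness} together with Lemma~\ref{lem:interp-subst}, the latter ensuring that restriction commutes with substitution of $k$.

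I expect the main obstacle to be Step~1. The free trace category identifies traces only up to the category equations, while syntactic trace terms are compared up to definitional equality. One must check that these two notions agree, so that $\llbracket-\rrbracket$ does not depend on how an extension is bracketed. I would first try normalizing every extension to a right-nested $\step$ sequence, which is canonical by the canonicity remark for finite traces, and then prove the concatenation lemma on normal forms.
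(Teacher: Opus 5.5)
Your Steps~2--3 are the paper's argument: unfold $\restrict$ as the action of $\Kf$ on morphisms and apply the functor laws. Your Step~1 usefully makes explicit what the paper leaves implicit, namely that $\llbracket-\rrbracket$ sends composition of extensions to composition in $\Tf$. Your nesting is reversed relative to the paper's display only because you write $\circ$ in $\Tf$ rather than in $\Tf^{op}$.

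Drop your last two paragraphs, though. The appeal to Theorem~\ref{thm:soundness} for conversion invariance is circular, since the paper derives soundness from this very lemma (Appendix~\ref{app:proof-roadmap}). The canonicity remark covers only closed terms, while extension terms may be open. None of it is needed anyway: Step~1's induction already gets independence of bracketing from associativity in the free trace category.
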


\begin{proof}[Proof sketch]
Unfold the interpretation of $\restrict$ as the action of $\Kf$ on morphisms and
apply functor laws.
\[
\llbracket \restrict(\epsilon_1\circ\epsilon_2,k)\rrbracket
=
\llbracket \restrict(\epsilon_1,\restrict(\epsilon_2,k))\rrbracket.
\]
This is the semantic counterpart of syntactic coherence used in
Theorem~\ref{thm:nonmono-char}.
\end{proof}

\begin{lemma}[Trace constructor adequacy]
\label{lem:trace-constructor-adequacy}
Interpretations of $\nil$ and $\step$ correspond to identity and composition with
generating transitions in the free trace category.
\end{lemma}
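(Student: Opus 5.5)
The plan is to fix the interpretation $\llbracket-\rrbracket$ of finite traces into the free trace category of Theorem~\ref{thm:free-trace-category} by recursion on trace terms. Concretely, the \textsc{T-TraceElim} eliminator will be used with motive $P(\tau)=\mathrm{Hom}(\sigma_0,\sigma)$ for $\tau:\FinTrace(\sigma_0,\sigma)$. The two clauses are
\[
\llbracket \nil(\sigma)\rrbracket = id_\sigma,\qquad
\llbracket \step(\tau,e,\pi)\rrbracket = g_{(\sigma_1,e,\pi,\sigma_2)}\circ\llbracket\tau\rrbracket,
\]
where $g_{(\sigma_1,e,\pi,\sigma_2)}:\sigma_1\to\sigma_2$ is the generating edge of the underlying transition graph determined by the witness $\pi:\Step(\sigma_1,e,\sigma_2)$. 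With this definition, the statement reduces to checking the two clauses against the definition of the free category: identities are empty paths, and composites are concatenations.

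\textbf{Base case.} By \textsc{T-Nil}, $\nil(\sigma):\FinTrace(\sigma,\sigma)$, and the empty path at $\sigma$ is exactly the identity $id_\sigma$ in the free category. Domain and codomain agree because they are both $\sigma$.

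\textbf{Step case.} Assume by induction that $\llbracket\tau\rrbracket:\sigma_0\to\sigma_1$ is the path read off from $\tau$. The premises of \textsc{T-Step} ensure that $\pi$ determines an edge $\sigma_1\to\sigma_2$. Hence the concatenation of the path $\llbracket\tau\rrbracket$ with this length-one path is a morphism $\sigma_0\to\sigma_2$, and this is precisely composition with a generator. The computation equations of \textsc{T-TraceElim} show that the recursive definition unfolds to exactly this composite, so the interpretation is well defined on definitional-equality classes. Lemma~\ref{lem:adequacy-sound} already records the typing part of this argument, and the present lemma makes the clause-by-clause identification explicit. This identification is also what feeds into Lemma~\ref{lem:interp-subst}.

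\textbf{Main obstacle.} The delicate point is to make sure the interpretation respects \textsc{Conv}. When the endpoint states are only definitionally equal, the codomain object has to be identified up to the interpretation of $\State$, which in the model is a set, so definitional equality becomes actual equality. I will handle this by interpreting state terms first and noting that $\llbracket\sigma\rrbracket=\llbracket\sigma'\rrbracket$ whenever $\Gamma\vdash\sigma\equiv\sigma':\State$. This follows from Theorem~\ref{thm:soundness} applied to the equality judgment. With that identification the hom-sets coincide on the nose, and no transport is needed.

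A second, minor check is that appending on the right matches concatenation in the free category's composition order. This holds by associativity of path concatenation, the same fact used in Theorem~\ref{thm:free-trace-category}.
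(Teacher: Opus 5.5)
Your base and step cases are the paper's argument. The lemma holds by construction of the interpretation clauses $\llbracket\nil(\sigma)\rrbracket=id_\sigma$ and $\llbracket\step(\tau,e,\pi)\rrbracket=\llbracket\tau\rrbracket;\llbracket e,\pi\rrbracket$. Your $g\circ\llbracket\tau\rrbracket$ is the same composite written in applicative order, so your ``second check'' needs no appeal to associativity.

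The problem is the step you single out as the main obstacle. To show that $\llbracket-\rrbracket$ is well defined across \textsc{Conv}, you obtain $\llbracket\sigma\rrbracket=\llbracket\sigma'\rrbracket$ for $\Gamma\vdash\sigma\equiv\sigma':\State$ from Theorem~\ref{thm:soundness}. But soundness presupposes a well-defined interpretation. In the paper's dependency order (Appendix~\ref{app:proof-roadmap}) it is derived after this lemma:
its Step~3 uses exactly the $\nil$/$\step$ clauses, and its Step~5 uses Lemma~\ref{lem:interp-subst}, which, as you note yourself, is fed by the present lemma. As written, that step is circular. It is also unnecessary here: the statement only asserts the defining clauses, and compatibility with \textsc{Conv} belongs to the soundness induction (its Step~2). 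If you want to treat it in this lemma, prove invariance under definitional equality simultaneously with the definition of $\llbracket-\rrbracket$, by induction on derivations, rather than citing the finished theorem.

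There is a second, smaller confusion. You cannot define $\llbracket-\rrbracket$ by instantiating \textsc{T-TraceElim} with motive $\mathrm{Hom}(\sigma_0,\sigma)$, because the hom-sets of the free trace category are not a DEKL type family. The interpretation is a metatheoretic structural recursion on (derivations of) trace terms, which is why its clauses hold ``by construction''. For the same reason, the object-level computation equations of \textsc{T-TraceElim} do not show that $\llbracket-\rrbracket$ respects definitional equality; that is again soundness content. With these two remarks removed, what remains is a correct proof along the paper's lines.
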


\begin{proof}[Proof sketch]
Directly by construction of the free category and interpretation clauses for trace
constructors.
\textbf{Nil clause:} $\llbracket\nil(\sigma)\rrbracket=id_\sigma$.
\textbf{Step clause:} $\llbracket\step(\tau,e,\pi)\rrbracket=
\llbracket\tau\rrbracket;\llbracket e,\pi\rrbracket$.
\end{proof}

\begin{corollary}[Adequacy bridge]
Under the assumptions of Theorem~\ref{thm:adequacy}, syntactic
reachability witnesses and semantic morphisms are mutually representable modulo
trace equations.
\end{corollary}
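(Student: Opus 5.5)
The corollary follows by assembling results the paper has already established, so the plan is to combine them. From Lemma~\ref{lem:adequacy-sound} together with the \emph{Step clause} of Lemma~\ref{lem:trace-constructor-adequacy}, I get a forward map $t\mapsto\llbracket t\rrbracket$. It sends each derivable $t:\FinTrace(\sigma_0,\sigma_n)$ to a morphism $\sigma_0\to\sigma_n$ of the free trace category of Theorem~\ref{thm:free-trace-category}. From Lemma~\ref{lem:adequacy-complete} I get a backward map $m\mapsto t_m$. To make this map a function rather than a bare existence statement, I will fix a canonical decomposition. Freeness means each morphism is either an identity or a finite composite of generating transitions, and I take this as a left-nested list of generators $(e_1,\pi_1),\dots,(e_k,\pi_k)$. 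Then $t_m$ is defined as the iterated term $\step(\cdots\step(\nil(\sigma_0),e_1,\pi_1)\cdots,e_k,\pi_k)$.

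The core of the argument is checking that these two maps are mutually inverse up to the stated equivalences.

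\textbf{Semantic side.} Here I need $\llbracket t_m\rrbracket=m$ modulo category equations. I prove this by induction on $k$. The case $k=0$ uses the \emph{Nil clause} $\llbracket\nil(\sigma)\rrbracket=id_\sigma$. The inductive step uses the Step clause, together with associativity and the identity laws to reassociate the composite.

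\textbf{Syntactic side.} Here I need $t_{\llbracket t\rrbracket}\equiv t$ for every trace term $t$. I prove this by induction on $t$ via $\mathsf{trace\mbox{-}elim}$, using its computation equations. A term built only from $\nil$ and $\step$ already has canonical left-nested form, so its interpretation decomposes back into exactly the same generator list. Terms presented differently are first brought to this form using \textsc{Conv} and definitional equality, exactly as Theorem~\ref{thm:adequacy} permits.

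The main obstacle is the syntactic side. It requires that decomposition in the free category be unique, meaning the generator list of a morphism is determined up to category equations. This is the real content of the freeness in Theorem~\ref{thm:free-trace-category}. My first attempt would be to present morphisms concretely as finite sequences of composable generators, with concatenation as composition. In that presentation the category equations hold on the nose, so the decomposition is literally the sequence. Given this, the corollary is a restatement of Theorem~\ref{thm:adequacy}, and the proof consists of citing it together with Lemma~\ref{lem:trace-constructor-adequacy}, which shows that the interpretation clauses used there are the ones fixed in this appendix.
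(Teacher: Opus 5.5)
Your argument is correct in outline, and your closing observation is right: the corollary restates Theorem~\ref{thm:adequacy}, so citing it already suffices.

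The paper takes a different route. It obtains the bridge from the appendix lemma chain, combining Lemma~\ref{lem:interp-subst}, Lemma~\ref{lem:restrict-coherence} and Lemma~\ref{lem:trace-constructor-adequacy}. Following the plan in Appendix~\ref{app:proof-roadmap}, adequacy is thus rebuilt from compositional properties of $\llbracket-\rrbracket$ rather than taken from Lemmas~\ref{lem:adequacy-sound}--\ref{lem:adequacy-complete}.

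You instead go through those main-text lemmas and prove the inverse property explicitly, using the path presentation of the free category. This has two advantages: it makes visible the unique-decomposition content of freeness, which the paper's citation list leaves implicit, and it dispenses with restriction coherence, which concerns $\Kf$ rather than $\FinTrace$. Conversely, the paper's substitution lemma supplies something your syntactic round trip uses without saying so: definitionally equal trace terms (e.g.\ $\beta$-redexes) receive the same morphism.

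That round trip needs two repairs.
\begin{itemize}
\item \textsc{Conv} only retypes a term along $A\equiv B$; it cannot rewrite $t$ into $\nil/\step$ form. What you need is canonicity of \emph{closed} $\FinTrace$ terms, which the paper asserts in Section~\ref{sec:meta}. The induction should then be meta-level induction on canonical forms, not an application of $\mathsf{trace\mbox{-}elim}$, which lives inside the calculus. Restricting to closed terms is essential: an open $x:\FinTrace(\sigma_0,\sigma_n)$ has no such form, and it denotes a family, not a single morphism.
\item Unique decomposition alone does not yield $t_{\llbracket t\rrbracket}\equiv t$. You also need $\llbracket e,\pi\rrbracket=\llbracket e',\pi'\rrbracket$ to force $(e,\pi)\equiv(e',\pi')$. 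This holds when the free category is built on the syntactic transition graph, as your path presentation tacitly assumes. State that explicitly, since $\Step$ is proof-relevant and distinct witnesses $\pi$ give distinct traces.
\end{itemize}
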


\begin{proof}[Proof sketch]
Combine Lemma~\ref{lem:interp-subst}, Lemma~\ref{lem:restrict-coherence}, and
Lemma~\ref{lem:trace-constructor-adequacy}.
\end{proof}

\section{Proof Roadmap for Full Soundness}
\label{app:proof-roadmap}

To scale from the present proof sketches to a full journal proof, one can adopt
the following dependency order:
\begin{enumerate}
  \item Define interpretation on contexts, substitutions, and type formers.
  \item Prove substitution compatibility (Lemma~\ref{lem:interp-subst}).
  \item Prove trace constructor preservation and restriction coherence
  (Lemmas~\ref{lem:restrict-coherence}--\ref{lem:trace-constructor-adequacy}).
  \item Establish rule-by-rule preservation for typing derivations.
  \item Derive semantic soundness (Theorem~\ref{thm:soundness}) and adequacy
  bridge as consequences.
\end{enumerate}
This roadmap is designed to be transcribed into proof-assistant scripts with
minimal reorganization.


\begin{thebibliography}{99}

\bibitem{awodey09}
S.~Awodey.
\newblock {\em Category Theory}.
\newblock Oxford Logic Guides. Oxford University Press, 2nd edition, 2009.

\bibitem{bradfield07}
J.~Bradfield and C.~Stirling.
\newblock Modal $\mu$-calculi.
\newblock In {\em Handbook of Modal Logic}, pages 721--756. Elsevier, 2007.

\bibitem{clarke86}
E.~M. Clarke, E.~A. Emerson, and A.~P. Sistla.
\newblock Automatic verification of finite-state concurrent systems using
  temporal logic specifications.
\newblock {\em ACM Transactions on Programming Languages and Systems}, 8(2):244--263, 1986.

\bibitem{dybjer96}
P.~Dybjer.
\newblock Internal type theory.
\newblock In {\em Types for Proofs and Programs (TYPES 1995)}, volume 1158 of
  {\em LNCS}, pages 120--134. Springer, 1996.

\bibitem{emerson81}
E.~A. Emerson and J.~Y. Halpern.
\newblock Decision procedures and expressiveness in the temporal logic of
  branching time.
\newblock {\em Journal of Computer and System Sciences}, 30(1):1--24, 1985.

\bibitem{jacobs99}
B.~Jacobs.
\newblock {\em Categorical Logic and Type Theory}.
\newblock Studies in Logic and the Foundations of Mathematics, Vol. 141. Elsevier, 1999.

\bibitem{kozen83}
D.~Kozen.
\newblock Results on the propositional $\mu$-calculus.
\newblock {\em Theoretical Computer Science}, 27(3):333--354, 1983.

\bibitem{martinlof84}
P.~Martin-Lof.
\newblock {\em Intuitionistic Type Theory}.
\newblock Bibliopolis, 1984.

\bibitem{pitts00}
A.~M. Pitts.
\newblock Categorical logic.
\newblock In {\em Handbook of Logic in Computer Science, Volume 5. Algebraic and
  Logical Structures}, pages 39--128. Oxford University Press, 2000.

\bibitem{pnueli77}
A.~Pnueli.
\newblock The temporal logic of programs.
\newblock In {\em Proceedings of the 18th Annual Symposium on Foundations of Computer Science (FOCS)}, pages 46--57. IEEE, 1977.

\bibitem{streicher91}
T.~Streicher.
\newblock {\em Semantics of Type Theory}.
\newblock Progress in Theoretical Computer Science. Birkh\"auser, 1991.

\bibitem{winskel93}
G.~Winskel and M.~Nielsen.
\newblock Models for concurrency.
\newblock In {\em Handbook of Logic in Computer Science, Volume 4}, pages 1--148. Oxford University Press, 1995.

\end{thebibliography}
\end{document}